\newcommand{\hide}[1]{}
\newcommand{\ABox}{
\raisebox{3pt}{\framebox[6pt]{\rule{6pt}{0pt}}}
}
\newenvironment{proof}{{\bf Proof:}}{\hfill\ABox}
\newtheorem{theorem}{{\bf Theorem}}
\newtheorem{lemma}{Lemma}
\newcommand{\lemlab}[1]{\label{lemma:#1}}
\newcommand{\thmlab}[1]{\label{thm:#1}}
\newcommand{\figlab}[1]{\label{fig:#1}}
\newcommand{\seclab}[1]{\label{sec:#1}}
\newcommand{\lemref}[1]{\ref{lemma:#1}}
\newcommand{\thmref}[1]{\ref{thm:#1}}
\newcommand{\secref}[1]{\ref{sec:#1}}
\newcommand{\figref}[1]{\ref{fig:#1}}
\def\P{{\mathcal P}}
\def\G{{\Gamma}}
\def\d{{\delta}}
\def\a{{\alpha}}
\def\b{{\beta}}
\def\L{{\Lambda}}
 \gdef\xxxmark{%
   \expandafter\ifx\csname @mpargs\endcsname\relax 
     \expandafter\ifx\csname @captype\endcsname\relax 
       \marginpar{xxx}
     \else
       xxx 
     \fi
   \else
     xxx 
   \fi}
 \gdef\xxx{\@ifnextchar[\xxx@lab\xxx@nolab}
 \long\gdef\xxx@lab[#1]#2{{\bf [\xxxmark #2 ---{\sc #1}]}}
 \long\gdef\xxx@nolab#1{{\bf [\xxxmark #1]}}
 \gdef\turnoffxxx{\long\gdef\xxx@lab[##1]##2{}\long\gdef\xxx@nolab##1{}}%
\begin{document}

\title{%
Source Unfoldings of Convex Polyhedra\\
via Certain Closed Curves\footnote{
A preliminary abstract appeared in~\protect\cite{iov-sucpr-09}.
}
}

%
%
%
%
%

\author{%
Jin-ichi Itoh%
    \thanks{Dept. Math.,
        Faculty Educ., Kumamoto Univ.,
        Kumamoto 860-8555, Japan.
    \protect\url{j-itoh@kumamoto-u.ac.jp}}
\and
Joseph O'Rourke%
    \thanks{Dept. Comput. Sci., Smith College, Northampton, MA
      01063, USA.
      \protect\url{orourke@cs.smith.edu}.}
\and
Costin V\^{i}lcu%
    \thanks{Inst. Math. `Simion Stoilow' Romanian Acad.,
P.O. Box 1-764,
RO-014700 Bucharest, Romania.
    \protect\url{Costin.Vilcu@imar.ro}.}
}

\maketitle


\begin{abstract}
We extend the notion of a source unfolding of a convex polyhedron $\P$ 
to be based on
a  closed polygonal curve $Q$ in a particular class rather than based on a point.
The class requires that $Q$ ``lives on a cone'' to both sides; 
it includes simple, closed quasigeodesics.
Cutting a particular subset
of the cut locus of 
$Q$ (in $\P$)
leads to a non-overlapping unfolding of the polyhedron.
This gives a new general method to unfold the surface of any convex polyhedron
to a simple, planar polygon.
\end{abstract}

\section{Introduction}
\seclab{Introduction}
Two general methods were known to unfold the surface $\P$                      
of any convex polyhedron to a simple polygon in the plane:                      
the source unfolding and the star unfolding,                
both with respect to a point $x \in \P$; see, e.g.,~\cite{do-gfalop-07}.
In~\cite{iov-sucpql-10} we defined a third general method,
the star unfolding
with respect to any member of a class of
simple, closed curves $Q$ on $\P$,
a class we called
``quasigeodesic loops.''  
Here we extend the source unfolding to 
be based on
a different class of simple, closed, polygonal
curves $Q$.
The intersection of the two classes includes simple, closed
quasigeodesics.
Thus both the source and star unfolding are now generalized from points
to these quasigeodesics, in all cases unfolding
$\P$ to a simple, planar polygon.

{\bf Cut Locus.}
The \emph{point source unfolding} cuts the \emph{cut locus} $C_{\P}(x)$ of
the point $x$:
the closure of the set of all those points $y$ to which there is
more than one shortest path on $\P$ from $x$.
The point source unfolding has been studied
for polyhedral surfaces since~\cite{ss-spps-86} 
(where the cut locus is called the ``ridge tree'').
Our method also relies on the cut locus, but now
the cut locus $C_{\P}(Q)$ with respect to $Q$.
The definition of $C_{\P}(Q)$ is the same: it is the closure of all
points to which there is more than one shortest path from $Q$.
Here it is analogous to the ``medial axis'' of a shape; 
indeed, the medial axis of a polygon
is the cut locus of the polygon's boundary.
As with the point source unfolding, our unfolding essentially cuts
all of $C_{\P}(Q)$, but this statement needs to be qualified: we do not cut
some segments of the cut locus (incident to $Q$), and we cut some
additional segments not in the cut locus (again incident to $Q$).

{\bf Convex Curves and Quasigeodesics.}
Let $p$ be a point on an 
oriented, simple, closed, polygonal curve $Q$ on $\P$.
Let $L(p)$ be the total surface angle incident to the left side of $p$,
and $R(p)$ the angle to the right side.
$Q$ is a \emph{convex curve} if $L(p) \le \pi$ for all
points of $Q$.
$Q$ is a \emph{quasigeodesic} if $L(p) \le \pi$ and $R(p) \le \pi$
for all $p$,
i.e., it is convex to both sides.
Quasigeodesics, introduced by Alexandrov (e.g., \cite[p.16]{az-igs-67}) 
are the natural generalization of geodesics to polyhedral surfaces.
A \emph{quasigeodesic loop} has a single exceptional point at which
the quasigeodesic angle condition to one side may not hold.
We have the inclusions:
\{convex\} $\supset$ \{quasigeodesic loop\} $\supset$ \{quasigeodesic\}.

The class of curves for which our source unfolding method works
includes (a)~convex curves that pass through at most one vertex of $\P$,
and (b)~quasigeodesics.  The method does not (always) work for all
quasigeodesic loops.  Thus the class of curves for which the source and
star unfolding methods work are not directly comparable, but they both
include quasigeodesics.

{\bf Curves ``Living on a Cone.''}
The precise class of curves for which our source unfolding method
works depends on the following notion.
Let $Q$ be (as before) an oriented, simple, closed, polygonal curve on $\P$.
Let $N$  be a vertex-free neighborhood of $Q$ in $\P$ to the left of and
bounded by $Q$.
We say $Q$ \emph{lives on a cone} to its left if there exists
a cone $\L$ and an $N$ such that $Q \cup N$ may be embedded isometrically
on $\L$, enclosing the cone apex $a$.
A \emph{cone} is a developable surface with curvature zero everywhere
except
at one point, its apex $a$.  We consider a planar polygon to be a cone with
apex angle $2\pi$, and a cylinder to be a cone with apex angle $0$.
The source unfolding described in this paper
works for any curve $Q$ that (a)~lives on a cone
to both sides (perhaps on different cones), and (b)~such that each point
of $Q$ is ``visible'' from the apex $a$ along a generator of the cone
(a line through $a$ lying in $\L$).
See Figure~\figref{Cone3D}.
\begin{figure}[htbp]
\centering
\includegraphics[width=0.5\linewidth]{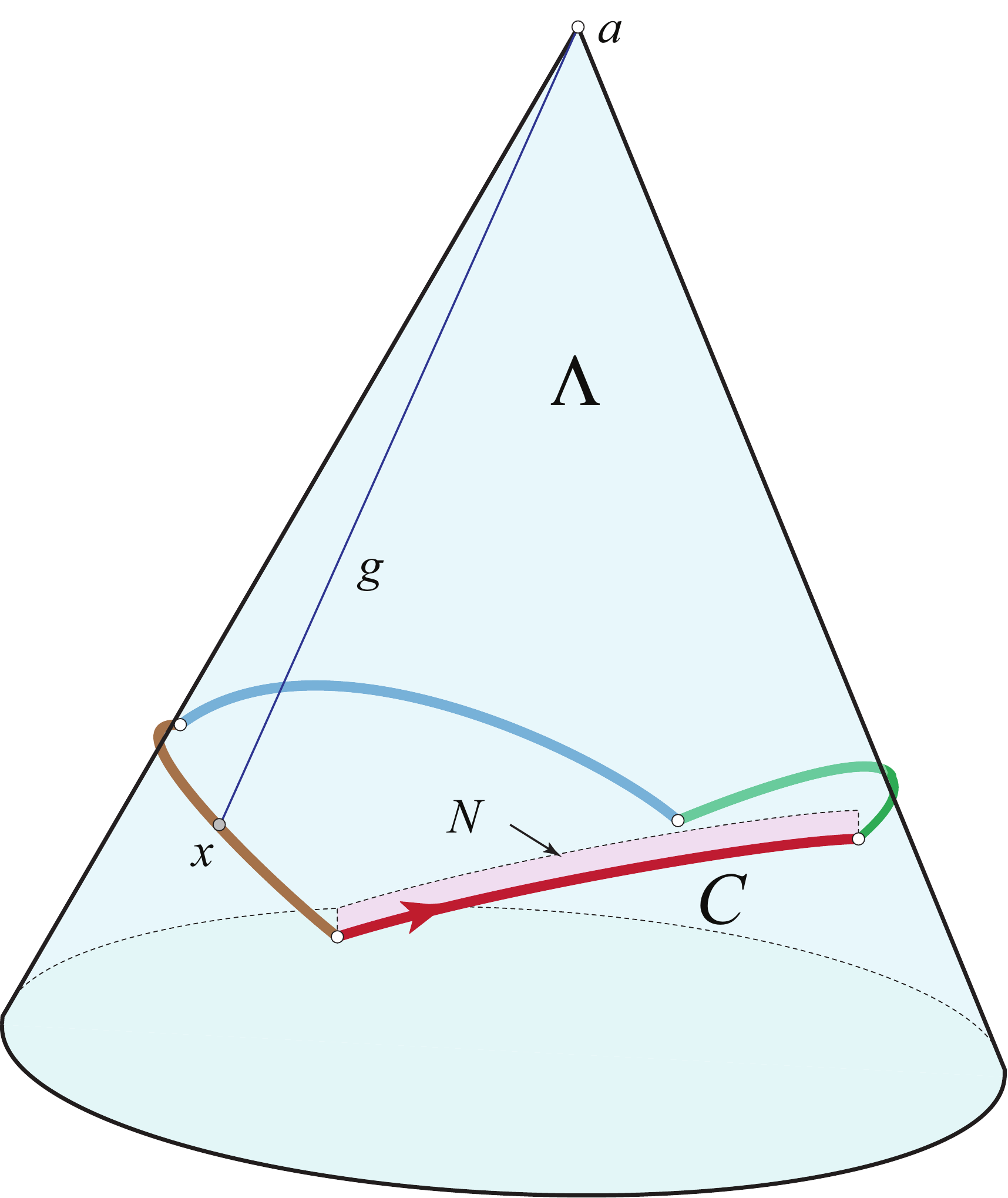}
\caption{A 4-segment curve $Q$ which lives
on cone $\L$ to its left.
A portion of $N$ is shown, 
and a generator $g=ax$ is illustrated.
(Adapted from~\protect\cite{ov-ceccc-11}.)
}
\figlab{Cone3D}
\end{figure}
We should remark that the cone on which a curve $Q$ lives has no direct
relationship (except in special cases) to the surface that results
from extending the faces of $\P$ crossed by $Q$.
The cones on which $Q$ lives play a central role in our proof technique.
Although a curve could live on many different cones, 
it is established in~\cite[Lem.~3]{ov-ceccc-11}
that the cone is uniquely determined to each side by $\P$.
We also established~\cite[Thm.~3]{ov-ceccc-11}
that the classes of curves listed above live on cones in this sense.
Indeed the set of curves enjoying the properties~(a) and~(b) above
is wider than what we list here, but as
the full class of curves that live on cones to both sides is not yet
precisely
delimited, we leave this issue aside.

\section{Preview of Algorithm}
\seclab{Preview}
Assume we are given a curve $Q$ satisfying our conditions:
it is convex to one side, and it lives on cones to either side.
We now describe the unfolding abstractly at a high level.
First we need some notation.

$Q$ divides $\P$ into two closed ``halves,''
$P_1$ and $P_2$.
We handle the two halves a bit differently.
Let $P_1$ be the half to the convex side (say, the left side),
and $P_2$ the half to the (possibly) reflex (i.e., nonconvex) right side.
(If $Q$ is a quasigeodesic, both halves are convex.)
Let $C_{P_i}$ be the portion of the cut locus $C_{\P}(Q)$ in each
half $P_i$.

Cut all edges of $C_{P_1}$ not incident to $Q$,
and cut one additional 
precisely selected segment from $C_{P_1}$ to $Q$.
To the (possibly) nonconvex side $P_2$, cut all of $C_{P_2}$, including those
edges incident to $Q$.
In addition, we cut shortest path 
segments from $C_{P_2}$ to any remaining reflex vertices of $Q$.
See Figure~\figref{ConvexQUnf}.
The result, we will show, is an unfolding of $\P$ to a simple, planar polygon.

\begin{figure}[htbp]
\centering
\includegraphics[width=0.85\linewidth]{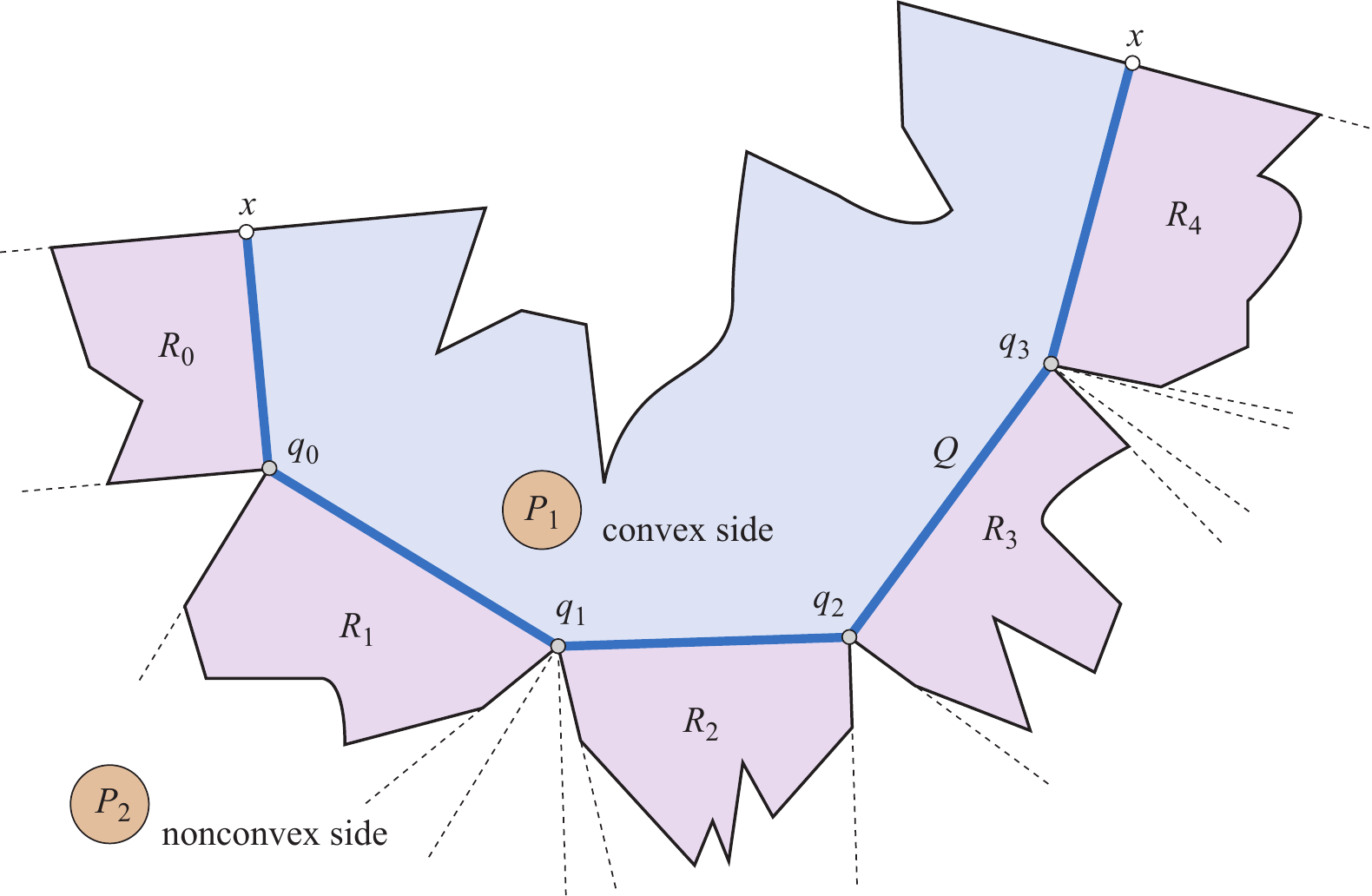}
\caption{An abstract depiction of the full source unfolding
for a convex curve $Q$. Here $q_1$ and $q_3$ are reflex vertices of $Q$ to the
$P_2$ side.
(The figure is not metrically accurate.)
}
\figlab{ConvexQUnf}
\end{figure}

We do not concentrate in this paper on algorithmic complexity issues,
which will only be touched upon in Section~\secref{Future}.
But the described procedure is a definite, finite algorithm which
works for any $Q$ in the appropriate class.

The unfolding procedure is best viewed as unfolding each half
separately, and then gluing them together along $Q$, as the examples
below will emphasize.

\begin{figure}[htbp]
\centering
\includegraphics[width=\linewidth]{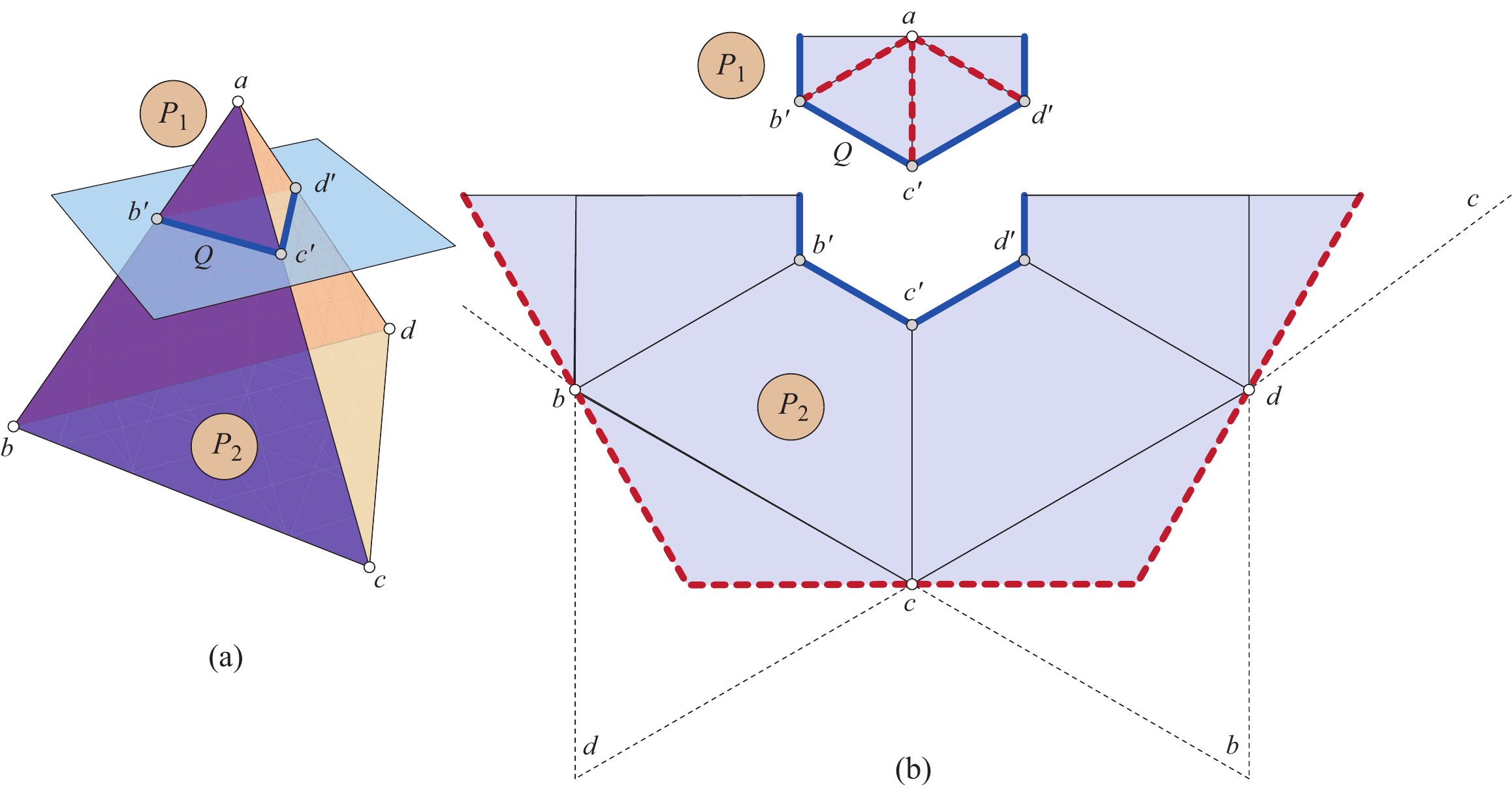}
\caption{(a)~Regular tetrahedron sliced at $\frac{2}{3}$'s of its altitude.
(b)~Unfoldings of $P_i$, with cut loci $C_{P_i}$ drawn dashed.
$C_{P_2}$ partitions the bottom face $\triangle bcd$ into three congruent triangles
meeting at its centroid.}
\figlab{TetraSlice}
\end{figure}


{\bf Examples.}
Throughout this paper we will use three example polyhedra $\P$
to illustrate concepts.
Here we introduce two; the third will be described in
Section~\secref{CutLocus}.

\emph{Example~1.}
The first example is a regular tetrahedron with a convex curve parallel
to the base;
see Figure~\figref{TetraSlice}(a),
where $Q=(b',c',d')$.
$Q$ lives on the same cone to each side, the cone determined by 
the lateral faces of the tetrahedron.
Note that the angle at each corner of $Q$ is convex ($\frac{2}{3} \pi$) to
one side and reflex ($\frac{4}{3} \pi$) to the other side.
One of our main results,
Theorem~\thmref{HalfSourceUnfolding},
shows that each half unfolds separately without overlap, as illustrated in Figure~\figref{TetraSlice}(b).
A second main result, Theorem~\thmref{SourceUnfolding},
shows that the two halves may be joined to one non-overlapping piece, in this case
producing a trapezoid.
This is an atypical case in many respects, but will be useful for that reason
to illustrate degeneracies.
For example, notice that no part of $C_{P_1}$ is cut in this example, because all
segments of that cut locus are incident to $Q$.

\emph{Example~2.}
Our second example is more generic:
a cube twice truncated, with
$Q$ the particular quasigeodesic shown in
Figure~\figref{CubeTrunc}(a).
The angles at
the vertices of $Q=(v_0,v_1,v_7,v_{10})$ within $P_1$
are, respectively, 
$(
\frac{3}{4} \pi,
\frac{1}{2} \pi,
\frac{3}{4} \pi,
\frac{1}{2} \pi
)$,
and the angles at those
vertices within $P_2$
are 
$(
\frac{3}{4} \pi,
\pi,
\frac{3}{4} \pi,
\pi
)$.
Because all of these angles are at most $\pi$, 
$Q$ is convex to both sides and so a quasigeodesic.
The cone on which $Q$ lives to the $P_2$-side is evident:
its apex is the cube corner truncated.
The cone on which $Q$ lives to the $P_1$-side is not evident;
it will be described later (in Figure~\figref{CubeCones}).
For $Q$ a simple closed quasigeodesic, the cut loci $C_{P_i}$ are each a single tree
with each edge a (geodesic) segment,
as illustrated in Figure~\figref{CubeTrunc}(b,c).
Theorem~\thmref{SourceUnfolding} leads
to the unfolding shown in Figure~\figref{CubeTrunc}(d).

Our third example will illustrate that the
edges of the cut locus can also be parabolic arcs.

\begin{figure}[htbp]
\centering
\includegraphics[height=0.95\textheight]{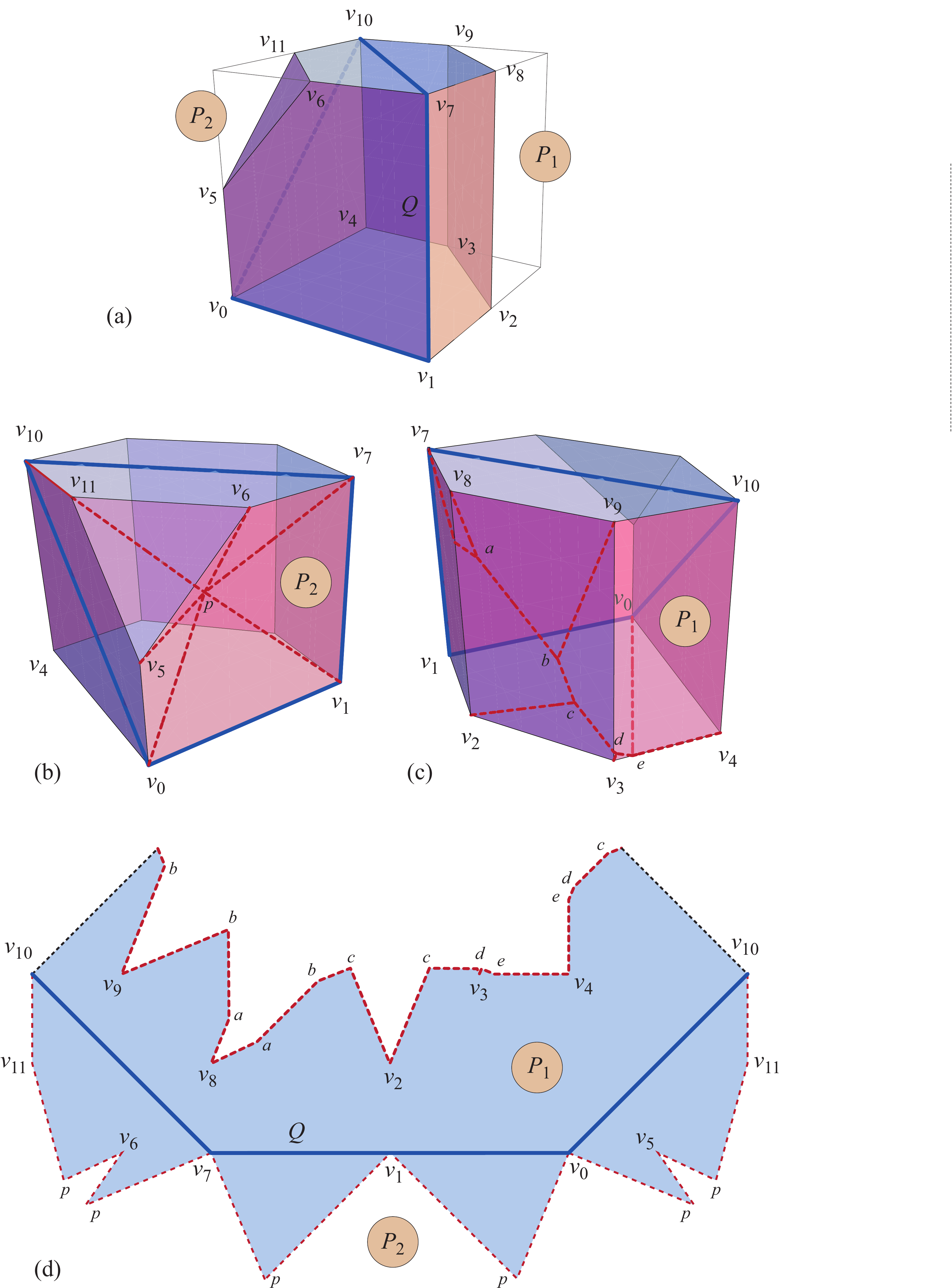}
\caption{
(a)~Truncated cube and quasigeodesic $Q=(v_0,v_1,v_7,v_{10})$.
(b,c)~Views of $P_i$ and cut loci $C_{P_i}$ (dashed).
(d)~Unfolding to a simple polygon.
}
\figlab{CubeTrunc}
\end{figure}

\section{Cut Locus of $Q$}
\seclab{CutLocus}
The proof that the described source unfolding avoids overlap
relies on two key ingredients:
the structure of the cut loci in each half of $\P$, and the cones
on which $Q$ lives.  Here we focus on the cut loci.

A \emph{vertex of $Q$} is a point $q$ with 
an angle at $q$ in $P_i$ different from $\pi$;
note this definition depends on the half $P_i$.
Thus if $Q$ is a geodesic, it has no vertices at all to either side.
It will be useful to distinguish two varieties
of these vertices:
a \emph{convex vertex} has angle less than $\pi$ in $P_i$,
and a \emph{reflex vertex} has angle greater than $\pi$.
Note the meaning of these is interchanged when looking
from $P_1$ compared to looking from $P_2$
(although it is possible that an angle differs from $\pi$ from 
one side and is equal to $\pi$ from the other,
e.g., $v_{10}$ in Figure~\figref{CubeTrunc}).
Let $q_0,q_1,\ldots,q_k$
be the vertices of $Q$
in some circular order, with respect to the half $P_i$.
(Although these vertices depend on $P_i$, we opt not to subscript with
$i$ to ease notation.)

The cut locus $C_{P_i}$ is a tree
whose leaves span the vertices of $P_i$, 
including the convex vertices of $Q$, 
which must be leaves of $C_{P_i}$.
(That $C_{P_i}$ is a tree is well-known in Riemannian geometry,
e.g., see~\cite[p.~539]{t-spts-98}. 
This can also be seen from the fact that each half $P_i$ has a finite intrinsic
diameter,
and so shortest paths from $Q$ are finite in length and each ends
at a cut point.)
Each non-leaf point $p$ of $C_{P_i}$ has at least two shortest
  paths
to $Q$; leaves have precisely one shortest path to $Q$,
or might lie on $Q$.
Each shortest path from $p \in C_{P_i}$ to $Q$ is called a 
\emph{projection} to $Q$.
The segment of $C_{P_i}$ incident to each convex vertex of $Q$ bisects
the angle there.

We now review our three examples to illustrate these relationships.
In Figure~\figref{TetraSlice}, $C_{P_1}$
is a tree spanning the apex $a$ and the three vertices
$\{b',c',d'\}$ of $Q$.
$C_{P_2}$ is a tree in the bottom face spanning its
three vertices $\{b,c,d\}$ (but not touching $Q$).

Figures~\figref{CubeTrunc}(b,c) show that $C_{P_i}$ are
each trees spanning the vertices of $P_i$ and touching $Q$:
at its four vertices in $P_2$, and its two vertices in $P_1$
(the angles at $v_1$ and at $v_{10}$ are $\pi$ in $P_1$).

\emph{Example~3.}
Figure~\figref{SquarePyramid2} shows an example that illustrates
several aspects not present in the other two examples.
First, $Q$ is neither a convex curve nor a quasigeodesic,
but it nevertheless lives on a cone to both sides, namely, the cone
determined by the pyramid's lateral sides.  Thus our method applies to
this $Q$.
Second, $C_{P_1}$ is a tree spanning the apex $a$ and the
two convex vertices of $Q$ to that side, $\{b',d'\}$.
$C_{P_1}$ includes four parabolic arcs:
one generated by the edge $b'e'$ and the reflex vertex $c'_1$,
one generated by the edge $c'_1 c'$ and the reflex vertex $e'$,
and two more symmetrically placed arcs.
$C_{P_2}$ is a tree that includes an {\tt X} on the bottom face
spanning $\{b,c,d,e\}$.
In general, parabolic arcs arise as 
arcs at equal distance to
a reflex
vertex and an edge.

\begin{figure}[htbp]
\centering
\includegraphics[width=\linewidth]{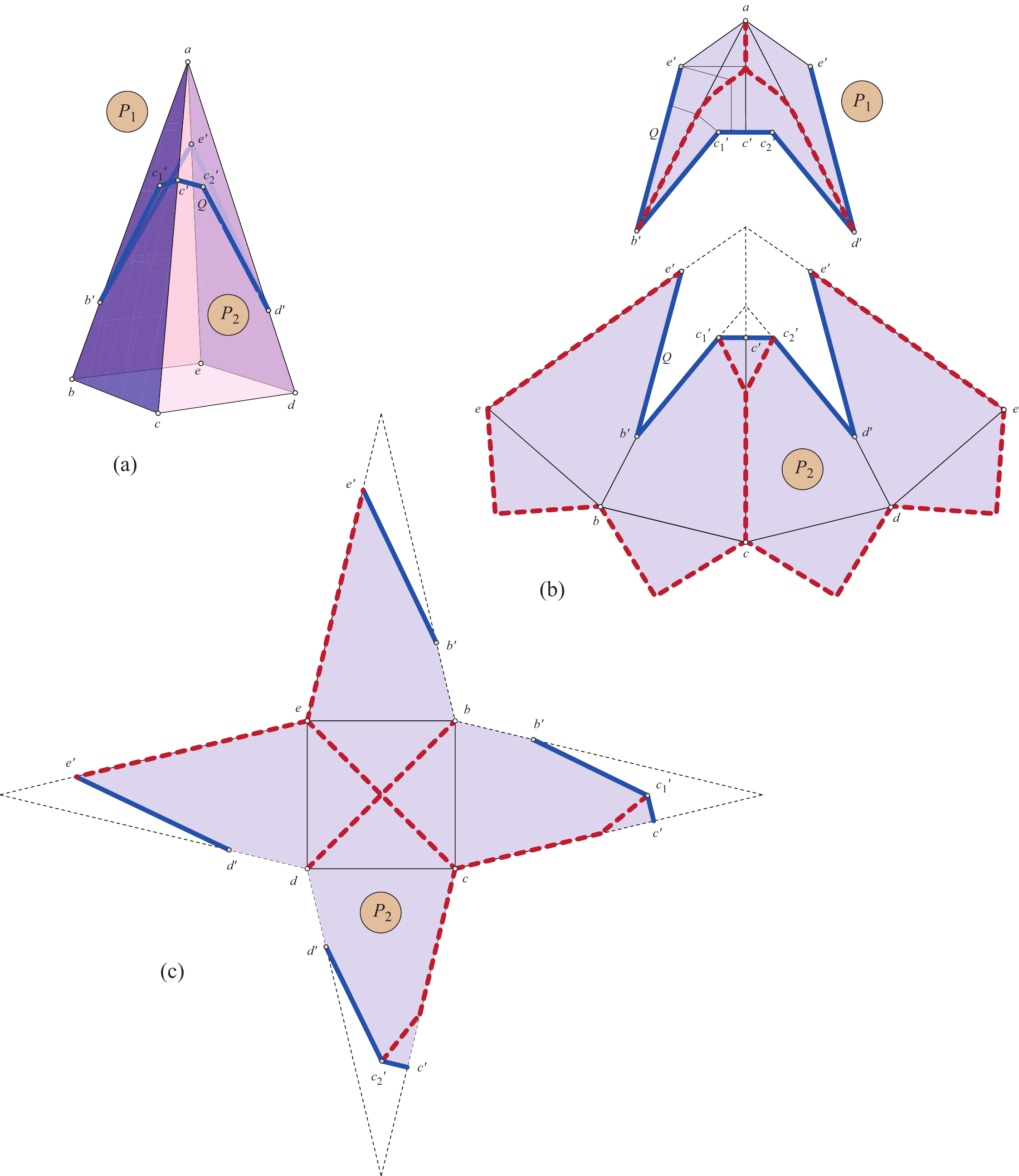}
\caption{(a)~A square-based right pyramid, with altitude twice
the base side length.
$|b b'| = |d d'| = |a e'| = \frac{1}{4}|a b|$,
and $|b' c'_1|= \frac{3}{4}|b'e'|$.
(b)~Source unfolding of the two halves $P_1$ and $P_2$.
(c)~A view of $P_2$ from the bottom of the pyramid.
}
\figlab{SquarePyramid2}
\end{figure}

\section{Cut Loci on Cones}
\seclab{Cones}
The essence of our proof, that the unfolding of each half $P_i$ avoids
overlap 
(Theorem~\thmref{HalfSourceUnfolding}),
is that pieces of $P_i$ embed into the cone $\L_i$,
and that $\L_i$ develops without overlap in the plane when cut along a generator.
A key to our approach is to define these ``pieces'' of $P_i$ (in Sec.~\secref{Peels}) by 
comparing the cut locus $C_{P_i}$ on $P_i$ with the
cut locus $C_{\L_i}$ on the cone $\L_i$ on which $Q$ lives to the $P_i$-side.
The cone $\L_i$ and surface $P_i$
share the same boundary $Q$,
and by construction, the same angles occur along $Q$.
Therefore, $Q$ has the same vertices in both $\L_i$ and $P_i$.
Thus $C_{P_i}$ and $C_{\L_i}$ both have the same set of leaves
touching $Q$, but of course in general they differ in the interior
of the surfaces.

To one of the two sides, the cone may be unbounded 
(established in~\cite{ov-ceccc-11}).
This occurs in both 
Figure~\figref{TetraSlice} and
Figure~\figref{SquarePyramid2}.
For the tetrahedron (Figure~\figref{TetraSlice}), $\L_2$ is unbounded,
and $C_{\L_2}$ is empty.
For the pyramid (Figure~\figref{SquarePyramid2}),  $\L_2$ is again unbounded,
and $C_{\L_2}$  is in this case a forest, 
including two halfline branches to infinity,
one from $e'$ through $e$, and another from the {\tt Y}-junction
below $c'$ through $c$.
The cut locus can only be a forest (as opposed to a tree) to
an unbounded side.  This is because an unbounded surface $\L_i$ contains
infinite-length
shortest paths without cut points, which then separate $C_{\L_i}$ into
components, as in the pyramid example.
Only one side may be unbounded, unless the cone is a cylinder.

In the truncated cube example, both cones are 
bounded (because the quasigeodesic $Q$ is convex to both sides),
and both $C_{\L_1}$ and  $C_{\L_2}$ are trees; see
Figure~\figref{CubeCones}.
Note that the cut locus extends to the apex $a_i$ of $\L_i$
in both instances,
which it must because the cut locus includes all vertices.

\begin{figure}[htbp]
\centering
\includegraphics[width=\linewidth]{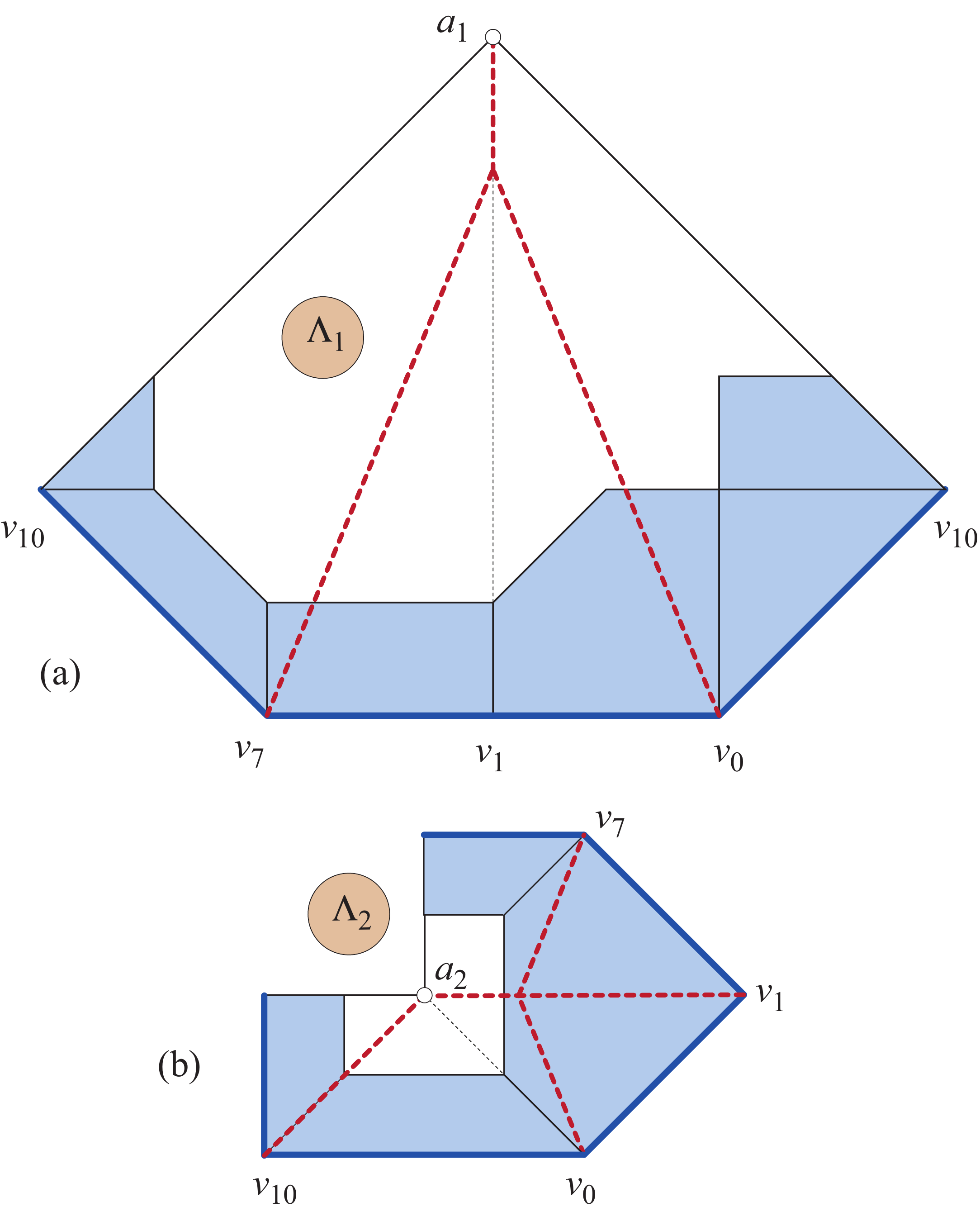}
\caption{
Developed cones $\L_1$ and $\L_2$ 
for $\P$ in Fig.~\protect\figref{CubeTrunc}(b,c), 
with $C_{\L_1}$ and $C_{\L_2}$ indicated.
}
\figlab{CubeCones}
\end{figure}

$C_{\L_i}$ is determined by any small neighborhood of $Q$ to the $P_i$-side,
because $\L_i$ is so determined (see again~\cite{ov-ceccc-11}).
And because every leaf of $C_{\L_i}$ is also a leaf of $C_{P_i}$,
the bisecting property of cut loci 
(each edge of the cut locus incident to a vertex $v \in Q$
bisects the angle of $P_i$ at $v$)
implies that
small neighborhoods of the leaves of $C_{\L_i}$
are included in $C_{P_i}$.  
In other words, the edges of $C_{P_i}$ and $C_{\L_i}$ issuing from
vertices of $Q$ coincide until they hit a vertex of $C_{P_i}$ or a
vertex of $P_i$.
We use this property 
in the proof of Lemma~\lemref{Nesting} below.

We now turn to defining the ``pieces'' of $P_i$ that embed in $\L_i$.

\begin{figure}[htbp]
\centering
\includegraphics[width=\linewidth]{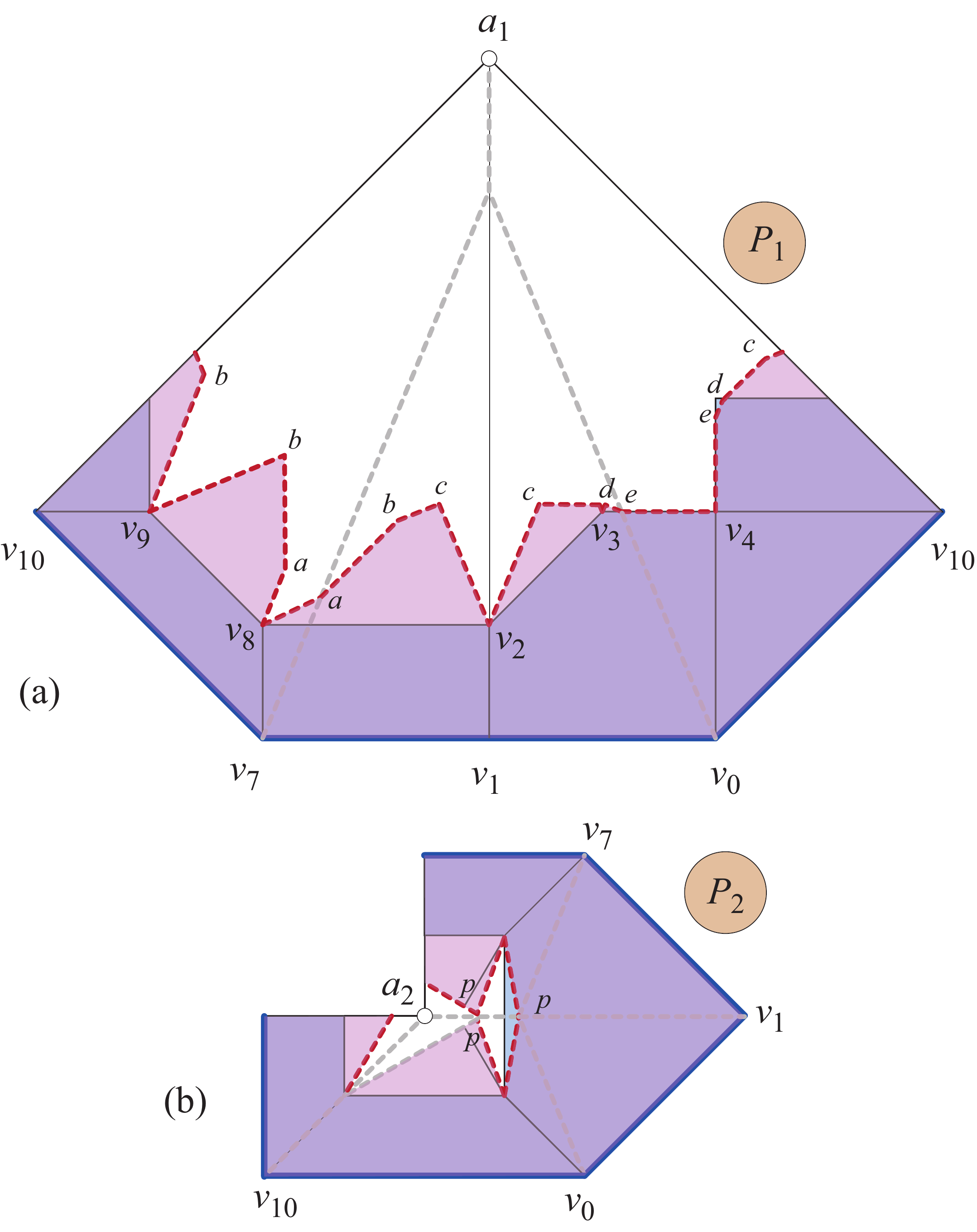}
\caption{
(a)~Nesting of cut $P_1$ within the unfolded $\L_1$ of Fig.~\protect\figref{CubeCones}(a).
(b)~Similar nesting of $P_2$ within $\L_2$ of Fig.~\protect\figref{CubeCones}(b).
}
\figlab{NestedPeels}
\end{figure}

\section{Peels \& Subpeels: Embedding in the Cone}
\seclab{Peels}
Let $u_0,u_1,\ldots,u_m$ be the vertices
(leaves and junction points) of $C_{P_i}$,
following a circular ordering of all of their their 
projections to $Q$.
Note that this ordering is unambiguous even though some points 
have multiple equal-length projections to $Q$,
because these projections never cross.
Therefore, those leaves and junction points of $C_{P_i}$ appear
several times in the 
sequence $u_0,u_1,\ldots,u_m$, each as many times as its number of projections.
Let $(u_j,u_k)$ be two consecutive leaves
of $C_{P_i}$, and $(u'_j,u'_k)$ corresponding consecutive projections
onto $Q$.
The \emph{peel} $\a_{P_i}(u_j,u_k)$ is the closed flat region of $P_i$
bounded by the two projection paths $u_j u'_j$, $u_k u'_k$,
the subpath $Q_{ij}$ of $Q$ from $u'_j$ to $u'_k$,
and the unique path in $C_{P_i}$ connecting $u_j$ to $u_k$,
such that $\a_{P_i}(u_j,u_k)$ contains no leaf of $C_{P_i}$.
Each peel $\a_{P_i}(u_j,u_k)$ is isometric to a planar convex polygon
if $Q_{ij}$ is convex to the left;
otherwise, $\a_{P_i}(u_j,u_k)$ can be decomposed into the union of planar convex polygons and triangles whose base is a parabolic arc and whose vertex opposite to that side is a reflex vertex of $Q$.

Between two consecutive leaves $u_j$ and $u_k$ 
are the vertices along the tree path, $u_j, u_{j+1},\ldots,u_{k-1},u_k$.
Each of these delimits a \emph{subpeel} of $\a_{P_i}(u_j,u_k)$, partitioning
the peel along the projection segments $u_l u'_l$, $j < l < k$.

The notion of peel and subpeel can be defined
analogously for $\L_i$.
If $C_{\L_i}$ has at most one component, everything defined above for
$C_{P_i}$ holds exactly as stated.
Assume now that $C_{\L_i}$ has at least two components.
In this case, all leaves of $C_{\L_i}$ belong to $Q$, and we consider them as projection points of themselves.
Again we consider all projections of the junction points of $C_{\L_i}$.
We take the circular order along $Q$ of all these projection points.
If two consecutive projections correspond to points on the same component of $C_{\L_i}$, we use the definition of peels and subpeels given above.

Now assume two consecutive projections $u_j$ and $u_k$ correspond to points on different components of $C_{\L_i}$.
Then the peel $\a_{\L_i} (u_j , u_k )$ is the closed flat region of
$\L_i$ bounded by the two projection paths $u_j u'_j$ , $u_k u'_k$,
the subpath $Q_{ij}$ of $Q$ from $u'_j$ to $u'_k$ , and the two paths
in $C_{\L_i}$ from $u_j$, respectively $u_k$, to infinity,
such that $\a_{P_i}(u_j,u_k)$ contains no leaf of $C_{P_i}$.
Here we note that each component of $C_{\L_i}$ has precisely one arc going to infinity, so the definition above is correct.
The definition for subpeels is analogous: either we get bounded polygons, or unbounded polygons determined by arcs to infinity in different components of $C_{\L_i}$.

These peels can be decomposed into the union of 
(a)~planar rectangular trapezoids, with one side either a line-segment, or a parabolic arc, or ``an arc at infinity''; and
(b)~triangles whose base is a parabolic arc (possibly at infinity) and whose vertex opposite to that side is a reflex vertex of $Q$.


We now prove the central technical result, Lemma~\lemref{Nesting},
which establishes the embedding of $P_i$ into $\L_i$.
This lemma shows that the peels of $P_i$ nest inside the peels of $\L_i$.
At one spot in the argument, we need a specialized lemma,
which we invoke (Lemma~\lemref{Junction}) before proving it.
As a consequence of the nesting lemma, embedding of
the half-surfaces into their cones follows, as 
summarized in Lemma~\lemref{Embed}.

\begin{lemma}[Peel Nesting]
Each subpeel $\b_{P_i}$ of $P_i$ is isometric to a region of a 
subpeel $\b_{\L_i}$ of $\L_i$.
The union of the subpeels in one peel $\a_{P_i}$ of $P_i$ is
non-overlapping in some $\a_{\L_i}$,
and thus each peel $\a_{P_i}$ is nested inside a peel $\a_{\L_i}$ of $\L_i$.
\lemlab{Nesting}
\end{lemma}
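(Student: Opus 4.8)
The plan is to foliate each subpeel by the shortest paths to $Q$ (its projections) and to compare this foliation with the one on the matching subpeel of the cone. Since every leaf of $C_{P_i}$ --- in particular every vertex of $P_i$ and every apex --- lies on the boundary of its peel, the interior of each subpeel $\b_{P_i}$ contains no vertex of $P_i$, hence is flat and develops isometrically onto a planar region. I would parametrize it by Fermi-type coordinates $(y,t)$, where $y$ ranges over the sub-arc $s\subset Q_{ij}$ carrying the feet of the projections that bound $\b_{P_i}$ and $t\in[0,\r_{P_i}(y)]$ is arc length along the projection emanating from $y$; here $\r_{P_i}(y)$ is the distance from $y$ to the point of $C_{P_i}$ at the top of the subpeel. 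The analogous coordinates describe the matching subpeel $\b_{\L_i}$, with $t\in[0,\r_{\L_i}(y)]$. In these coordinates the claimed isometry is the identity map $(y,t)\mapsto(y,t)$, and the whole lemma reduces to the radial comparison $\r_{P_i}(y)\le\r_{\L_i}(y)$ together with a non-crossing statement.

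First I would develop $\b_{P_i}$ and $\b_{\L_i}$ into one plane with their shared sub-arc $s$ placed identically. Because $P_i$ and $\L_i$ agree on a vertex-free neighborhood $N$ of $Q$ and carry the same angles at the vertices of $Q$, each projection leaves $Q$ in the same direction in both surfaces: perpendicularly at a smooth point of $Q$, and along the common angle bisector at a convex vertex of $Q$ (the bisecting property quoted in Section~\secref{Cones}). A projection is a geodesic, hence a straight ray once developed, so over each $y\in s$ the developed projection in $P_i$ and the developed projection in $\L_i$ are the \emph{same} planar ray, coinciding for all $t$ up to the first of the two cut distances. Granting $\r_{P_i}(y)\le\r_{\L_i}(y)$, the $P_i$-ray is therefore an initial segment of the $\L_i$-ray, the identity map carries $\b_{P_i}$ isometrically onto the subregion $\{t\le\r_{P_i}(y)\}$ of $\b_{\L_i}$, and the top boundary of $\b_{P_i}$ (a cut-locus edge, possibly a parabolic arc opposite a reflex vertex of $Q$) develops inside $\b_{\L_i}$. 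This gives the first assertion of the lemma.

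The crux is the radial comparison $\r_{P_i}\le\r_{\L_i}$, and this is where the cone hypothesis does the work. The two cut loci coincide in a neighborhood of $Q$ (Section~\secref{Cones}: their edges issuing from a vertex of $Q$ agree until one of them meets a vertex of $P_i$ or a junction), so along projections that reach $C_{P_i}$ before encountering any interior vertex of $P_i$ one has equality $\r_{P_i}(y)=\r_{\L_i}(y)$. I would then argue by induction outward along the tree $C_{P_i}$. The only way a projection in $P_i$ can be \emph{shortened} relative to $\L_i$ is that it meets a branching of $C_{P_i}$ created by an interior vertex $v$ of $P_i$ that $\L_i$ does not possess; the positive curvature at $v$ forces neighboring projections to become equidistant from $Q$ sooner than on the smooth, apex-only cone, so the cut point, and hence $\r_{P_i}(y)$, can only move toward $Q$. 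Making this precise at such a branch point is exactly the ``one spot'' flagged in the text, and is handled by Lemma~\lemref{Junction}, which controls the angles and lengths where a junction of $C_{P_i}$ sits inside the corresponding region of $\L_i$. The degenerate regimes are benign: a parabolic top arc is treated with the reflex vertex of $Q$ playing the role of a point source for the coordinate $t$, and when $\L_i$ is unbounded the relevant $\b_{\L_i}$ is a half-infinite strip, so $\r_{\L_i}(y)=\infty$ and the comparison is automatic.

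Finally, the second assertion --- that the subpeels of one peel $\a_{P_i}$ fit together without overlap inside a single $\a_{\L_i}$ --- follows from the non-crossing of projections. The feet of the projections partition $Q_{ij}$ into consecutive sub-arcs, one per subpeel, and shortest paths to $Q$ never cross; hence over disjoint sub-arcs the developed projection rays sweep disjoint sub-regions of the flat peel $\a_{\L_i}$. The isometric images of the subpeels of $\a_{P_i}$ therefore tile disjointly inside $\a_{\L_i}$, and their union is contained in $\a_{\L_i}$, giving the nesting $\a_{P_i}\subset\a_{\L_i}$. I expect the main obstacle to be the radial comparison at an interior vertex of $P_i$, i.e.\ verifying that $C_{P_i}$ branches no farther from $Q$ than the cut structure of the apex-only cone $\L_i$; once Lemma~\lemref{Junction} is in hand, all remaining steps are bookkeeping on flat developments.
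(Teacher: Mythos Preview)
Your Fermi-coordinate framework and reduction to a radial comparison $\r_{P_i}(y)\le\r_{\L_i}(y)$ is a clean reformulation, and it is essentially equivalent to what the paper is after. But the proposal does not actually \emph{prove} the comparison, and that is where the entire difficulty lies. Your inductive step---``positive curvature at $v$ forces neighboring projections to become equidistant from $Q$ sooner''---is a heuristic, not an argument; a cut point on $P_i$ is determined by a competing shortest path from elsewhere on $Q$, and that competitor may wind past several vertices of $P_i$, so you cannot localize the effect at a single $v$. Similarly, your base case (``equality along projections that reach $C_{P_i}$ before encountering any interior vertex'') is not justified: the projection itself may avoid vertices while the competing path does not, so $\r_{P_i}(y)<\r_{\L_i}(y)$ is still possible there.

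The paper's proof takes a different route: rather than a pointwise radial estimate, it traces an Eulerian tour $\G_{\L_i}$ of the image of $C_{P_i}$ on $\L_i$ and shows that \emph{whenever} $\G_{\L_i}$ reaches $C_{\L_i}$ at a point $y$, that $y$ must be a junction of $C_{P_i}$. This is Lemma~\lemref{Junction}, and its content is not what you describe (``controls the angles and lengths where a junction sits''); it is a specific crossing statement proved by applying the Alexandrov--Toponogov comparison theorem to a chain of triangles along $Q$ between the two $\L_i$-projections of $y$, obtaining $\d_{P_i}(y,y_2)\le\d_{\L_i}(y,y_2)$ and then forcing equality. That comparison argument is the engine you are missing. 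Your invocation of Lemma~\lemref{Junction} does not fit your setup, since the lemma is formulated for the tracing argument and you have no curve $\G_{\L_i}$ in play. Finally, even granting $\r_{P_i}\le\r_{\L_i}$, you have only shown $\b_{P_i}$ lies below $C_{\L_i}$ over its foot arc $s$; you have not argued that no junction of $C_{\L_i}$ projects into the interior of $s$, which is what forces $\b_{P_i}$ into a \emph{single} subpeel $\b_{\L_i}$. The paper gets this alignment for free from the junction-crossing statement.
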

\begin{proof}
Let $\G_{P_i}$ be the directed curve that traces around
the maximal subtree of $C_{P_i}$ disjoint from $Q$,
i.e., around the cut locus minus the edges incident to $Q$.
$\G_{P_i}$ is an Eulerian tour of this subtree,
tracing
its edges twice, once from each side.
Thus each non-leaf point $x$ of $C_{P_i} \setminus Q$ has at least
two images in $\G_{P_i}$, and a junction point $x$ of $C_{P_i}$ has
deg$(x) \ge 3$ images in $\G_{P_i}$.
We define $\G_{\L_i}$ to be the ``image'' of $\G_{P_i}$ on $\L_i$:
an isometric tracing with the same angles, on $\L_i$.
We are going to track a variable point $x_t$ on $\G_{\L_i}$ inside $\L_i$,
and analyze how $x_t$ interacts with $C_{\L_i}$.
The crux of the proof analyzes what happens when $x_t$ might leave
its peel $\a_{\L_i}$.

We will illustrate the proof with the portion
of $\G_{\L_1}$ for $P_1$ of our truncated cube example shown in Figure~\figref{JunctionProof}.
\begin{figure}[htbp]
\centering
\includegraphics[width=0.75\linewidth]{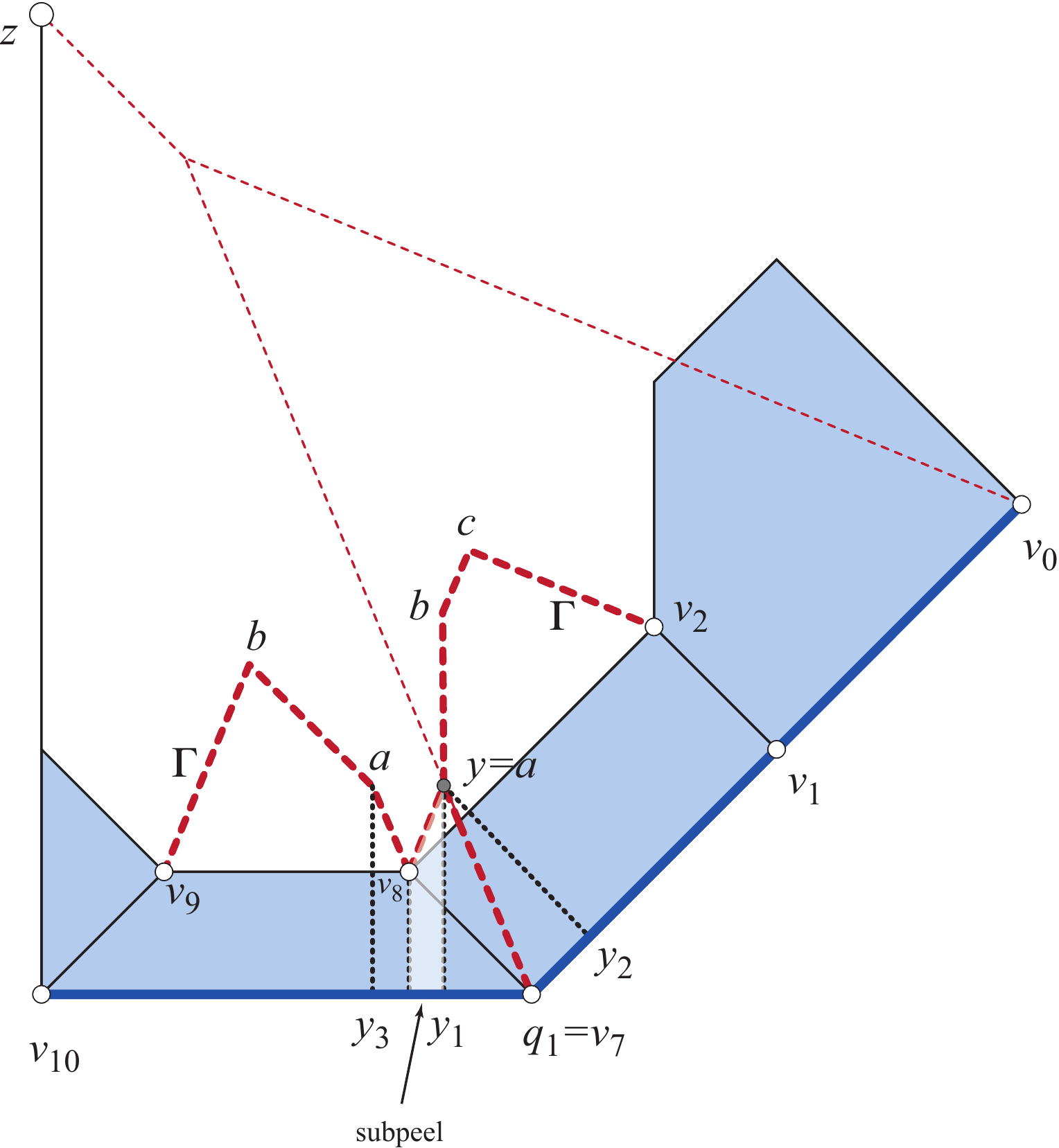}
\caption{$\G_{\L_1}$ is the dashed curve, directed left-to-right
(from $v_9$ to $v_8$).
The current subpeel is that bounded on the left by $v_8$.
$x_t$ crosses $C_{\L_1}$ at $y$, which coincides
with the junction point $a$ of $C_{P_1}$; see
Fig.~\protect\figref{CubeTrunc}(c).
The three segments are the projections of $y{=}a$ on $P_1$;
$y_1$ and $y_2$ are the only two projections of $y$ on $\L_1$.
$\G_{\L_1}$ enters the new subpeel
$\b_{\L_1}(v_7,y_2)$ and the new peel is $\a_{\L_1}(v_7,v_0)$.
}
\figlab{JunctionProof}
\end{figure}

To initiate the analysis,
let $x_0 \in \G_{\L_i}$ be any point in the interior of a peel $\a_{\L_i}$.
We need to argue for the existence of such a point.
We will choose a leaf of $C_{P_i}$, but it must be chosen with
some care.
First, if $C_{P_i}=C_{\L_i}$, then the claim of the lemma
is trivial.  So assume the cut loci differ.
Delete from $C_{P_i}$ all branches
in common with $C_{\L_i}$.
By ``branches'' here we mean subtrees of $C_{P_i}$ whose removal
does not disconnect $C_{P_i}$; so 
what remains is still a tree.


Let $v \in P$ be a leaf of this reduced tree.
As we argued above, the image in $\L_i$ of a small neighborhood of $v$
in $P_i$ remains included in $\L_i$.  
So we may take $x_0 = v$.
Vertex $v_8$ in Figure~\figref{JunctionProof} could serve as $x_0$.

Now we move a point $x_t$ along $\G_{\L_i}$ continuously from $x_0$.
As long as $x_t$ remains inside the peel $\a_{\L_i}$,
the subpeel $\b_{P_i}$ to which $x_t$ belongs remains included
in peel $\a_{\L_i}$.
Adjacent subpeels $\b_{P_i}$  and $\b'_{P_i}$
share a side orthogonal to $Q$ in $\a_{\L_i}$, and so they do not overlap
one another.

Now assume that $x_t$ reaches a point $x_t = y \in C_{\L_i}$.
If $\G_{\L_i}$ touches but does not cross $C_{\L_i}$,
then 
the subpeel $\b_{P_i}$ to which $x_t$ belongs,
or the adjacent subpeel $\b'_{P_i}$ into which $x_t$ is moving,
remains included in $\a_{\L_i}$,
and there is nothing to prove.
We should note that, in general, we cannot
conclude that touching-but-not-crossing $C_{\L_i}$ necessarily 
implies that $y$ is a junction of $C_{P_i}$.
It could be that $y$ is a leaf of $C_{\L_i}$, which is another
instance of touching-but-not-crossing, and again nesting remains clearly true.
So henceforth we assume $y$ is interior to  $C_{\L_i}$.

Now we consider the situation when $x_t$ crosses $C_{\L_i}$ at $y$,
as it does at $y{=}a$ in Figure~\figref{JunctionProof}.
This is exactly when the claim of the theorem might be false,
for the subpeel $\b_{P_i}$ bounded by $\G_{\L_i}$ then extends beyond the peel $\a_{\L_i}$.
But Lemma~\lemref{Junction} below establishes that $y$ must be 
(the image of) a junction
point of $C_{P_i}$
(and note that $a$ is a junction in our example).
This means that the subpeel $\b_{P_i}$ ends at $y$, and $\G_{\L_i}$ enters a new
subpeel $\b'_{P_i}$ of a new peel $\a'_{\L_i}$
of $\L_i$. 
Thus the subpeel nesting claim of
the lemma is established.
The peel nesting property will be obtained following the proof of Lemma~\lemref{Junction}.
{\hfill\ABox}\end{proof}

We complete the proof of Lemma~\lemref{Nesting} with a technical lemma
that was invoked above.
We use $\d_S(x,y)$ to represent the distance function on surface $S$:
the length of a shortest path on $S$ between $x$ and $y$.
We will employ the
Alexandrov-Toponogov Comparison Theorem
in two versions, Lemmas~\lemref{comp_1} and~\lemref{comp_2}
in the Appendix.

\begin{lemma}[Junction Crossing]
When $\G_{\L_i}$ crosses $C_{\L_i}$ at $x_t=y$,
$y$ is (the image of) a junction point of $C_{P_i}$.
\lemlab{Junction}
\end{lemma}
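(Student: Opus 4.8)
The plan is to argue by contradiction: assume that at $x_t=y$ the traced curve $\G_{\L_i}$ crosses $C_{\L_i}$ \emph{transversally}, yet the preimage $x\in C_{P_i}\setminus Q$ carrying $y$ is an \emph{interior} point of an edge of $C_{P_i}$ (so $\deg_{C_{P_i}}(x)=2$). From this I would manufacture a third shortest path from $x$ to $Q$ inside $P_i$, contradicting degree $2$ and forcing $x$ to be a junction. The standing tool is that the development of any subpeel of $P_i$ into $\L_i$ is an isometry onto its image, so that a $P_i$-shortest path lying in that subpeel develops to a geodesic of the same length in $\L_i$; write $\phi$ for this development near $x$.

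First I would pin down the distance at the crossing. Since $y$ is interior to $C_{\L_i}$, it carries exactly two $\L_i$-shortest paths to $Q$, reaching $q_1,q_2\in Q$ with common length $s=\d_{\L_i}(y,Q)$, and $C_{\L_i}$ bisects the angle they make at $y$. On the $P_i$ side, the subpeel being developed supplies one $P_i$-shortest path from $x$ to a point $p\in Q$ of length $r=\d_{P_i}(x,Q)$, whose image $\overline{yp}\subset\L_i$ is a geodesic of length $r$; thus $s\le r$. The comparison lemmas (Lemmas~\lemref{comp_1} and~\lemref{comp_2}), applied to the nonnegatively curved surfaces $P_i$ and $\L_i$ that share the boundary strip $N$, give the reverse bound $\d_{\L_i}(\phi(x),Q)\ge\d_{P_i}(x,Q)$ for developed points, consistent with the nesting of $P_i$ inside $\L_i$. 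Hence $s=r$, and $\overline{yp}$ is one of the two $\L_i$-minimizers at $y$, say $\overline{yq_1}$ with $p=q_1$.

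The heart of the argument is to transfer the \emph{second} $\L_i$-minimizer $\overline{yq_2}$ back to $P_i$. Because $\G_{\L_i}$ genuinely crosses $C_{\L_i}$ at $y$, the trace passes from the peel on the $q_1$-side into the peel on the $q_2$-side, so $\overline{yq_2}$ sweeps through the developed image of subpeels adjacent to $x$; pulling it back along $\phi$ produces a $P_i$-path from $x$ to $q_2$ of length $s=r$, which is therefore a $P_i$-shortest path since $r=\d_{P_i}(x,Q)$. The transversality of the crossing — the direction of $\G_{\L_i}$ at $y$, which is the isometric image of the angle bisector along the $C_{P_i}$-edge, differs from the bisector direction of $C_{\L_i}$ — forces $q_2$ to point into a peel distinct from those housing the two shortest paths an interior edge point of $C_{P_i}$ already possesses. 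Thus $x$ acquires a third distinct $P_i$-shortest path to $Q$, so $\deg_{C_{P_i}}(x)\ge 3$ and $x$ is a junction, exactly as at $y{=}a$ in Figure~\figref{JunctionProof}. The complementary touching-but-not-crossing case, where the two developed minimizers make $\G_{\L_i}$ coincide with the bisector $C_{\L_i}$, produces no extra path and was already set aside; with junctions now identified, the deferred peel-nesting conclusion of Lemma~\lemref{Nesting} also follows.

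The main obstacle is the transfer step. A priori an $\L_i$-minimizer such as $\overline{yq_2}$ could exit the portion of $\L_i$ covered by developed subpeels of $P_i$ and wind past the cone apex, where the concentrated curvature of $\L_i$ behaves differently from the distributed vertex curvature of $P_i$; one must show that at a transversal crossing this does not occur, so that $\overline{yq_2}$ pulls back to a genuine $P_i$-shortest path rather than one admitting a shortcut in $P_i$. This is precisely where the two forms of Alexandrov--Toponogov comparison are indispensable: Lemma~\lemref{comp_1} secures the lower bound $\d_{\L_i}\ge\d_{P_i}$ that fixes $s=r$, while the hinge version, Lemma~\lemref{comp_2}, controls the angle $\overline{yq_2}$ makes at $y$, keeping it inside the $P_i$-image and ruling out the degenerate coincidence $q_2=p$ that transversality must exclude.
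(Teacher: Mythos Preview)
Your overall plan---locate two projections of $y$ on $\L_i$, show that both lift to shortest paths on $P_i$, and then invoke the ``other side'' of $\G_{P_i}$ to force a third projection and hence a junction---matches the paper's. But the central step, passing from $\d_{\L_i}$ to $\d_{P_i}$, is not actually carried out in your proposal, and the way you phrase it cannot be made to work as stated.

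You write that the comparison lemmas, ``applied to the nonnegatively curved surfaces $P_i$ and $\L_i$ that share the boundary strip $N$, give the reverse bound $\d_{\L_i}(\phi(x),Q)\ge\d_{P_i}(x,Q)$.'' There is no such comparison theorem. Lemmas~\lemref{comp_1} and~\lemref{comp_2} compare a triangle or hinge on a convex surface to a \emph{flat} one; they do not compare two convex surfaces to each other. The cone $\L_i$ is not flat---it has all of $P_i$'s total curvature concentrated at the apex---so you cannot treat it as the Euclidean model space. Moreover, appealing to the ``nesting of $P_i$ inside $\L_i$'' here is circular: the nesting lemma is exactly what this Junction lemma is being invoked to complete.

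What the paper does instead is localize the comparison so that the $\L_i$ side really is flat. Let $y_1$ and $y_2$ be the two adjacent $\L_i$-projections of $y$, and let $q_1,\dots,q_k$ be the vertices of $Q$ between them. Each triangle $\triangle y q_{j-1} q_j$ on $\L_i$ misses the apex and is therefore a genuine planar triangle; applying Lemma~\lemref{comp_2} to the corresponding hinge on $P_i$ gives $\d_{P_i}(y,q_j)\le\d_{\L_i}(y,q_j)$, and Lemma~\lemref{comp_1} propagates the needed angle inequality to the next triangle. Chaining these yields $\d_{P_i}(y,y_2)\le\d_{\L_i}(y,y_2)$. Equality is then forced because $\d_{\L_i}(y,y_2)=\d_{\L_i}(y,y_1)=\d_{P_i}(y,y_1)$ and $y_1$ is already a nearest point on $P_i$; and equality in each comparison step makes the whole region between $yy_1$ and $yy_2$ isometric on $P_i$ and $\L_i$. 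This is what legitimizes the ``transfer'' of $\overline{yq_2}$ back to $P_i$---the step you correctly flag as the main obstacle but do not resolve. Without this triangulation-along-$Q$ argument, neither the distance equality $s=r$ nor the pullback of the second minimizer is justified.
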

\begin{proof}
Because $\G_{\L_i}$ crosses $C_{\L_i}$ at $y$,
$y$ has at least two projections 
to $Q$ on $\L_i$
(because every interior point of $C_{\L_i}$ has at least two
projections).
One of the two segments, say $y y_1$, also is a segment on $P_i$,
because up to this point $x_t{=}y$ in our tracing,
the peel $\a_{P_i}$ containing $x_t$ is nested in $\a_{\L_i}$.
So the length of these segments is the
same on $P_i$ and on $\L_i$: $\d_{P_i}(y,y_1)=\d_{\L_i}(y,y_1)$.

Let $y_2$ be the ``next'' projection of $y$ to $Q$ on $\L_i$,
i.e., there is no other $\L_i$ projection of $y$ between $y_1$ and $y_2$
along $Q$.
See Figure~\figref{JunctionProof}, which illustrates these two
$\L_i$ projections.
We aim to establish that $\d_{P_i}(y,y_2) \le \d_{\L_i}(y,y_2)$,
which will imply equality
(because all projection segments from $y$ have the same
length on $P_i$, and the same length on $\L_i$---details below).
This will imply that
the regions on $P_i$ and on $\L_i$ determined by $\{y, y_1, y_2\}$ are            
isometric.
Thus $y$ would have two projection segments $y y_1$ and $y y_2$
on $P_i$.  But these two segments derive from $x_t \in \G_{\L_i}$
corresponding to a point $y \in C_{P_i}$, and there is a second point of $\G_{\L_i}$
that corresponds to the same $y$, on ``the other side'' of the cut locus.
Thus there must be a third projection from $y$ to $y_3$ on $P_i$,
which implies that $y$ is a junction of $C_{P_i}$.
This situation is illustrated in Figure~\figref{JunctionProof},
where two points along $\G_{\L_i}$ derive from ``different sides'' of $a$.

Now we prove the claim that $\d_{P_i}(y,y_2) \le \d_{\L_i}(y,y_2)$.
Consider the vertices $q_1,\ldots,q_k$ between
$y_1$ and $y_2$, ordered from $y_1$ to $y_2$.
(In Figure~\figref{JunctionProof} there is just one such vertex $q_1$;
see Figure~\figref{Toponogov} for a generic example.)
Then the right triangle $\triangle y y_1 q_1$ is flat on $\L_i$,
and non-negatively curved on $P_i$.
Hence by the Alexandrov-Toponogov Comparison Theorem
(Lemma~\lemref{comp_2}),
the hypotenuse is no longer on $P_i$,
$\d_{P_i}(y,q_1) \le \d_{\L_i}(y,q_1)$,
and the angle at $q_1$ is at least as large on $P_i$
(Lemma~\lemref{comp_1}),
$\angle_{P_i} (y, q_1, y_1) \ge \angle_{\L_i} (y, q_1, y_1)$.
This angle inequality implies
that $\angle_{P_i} (y, q_1, q_2) \le \angle_{\L_i} (y, q_1, q_2)$;
see Figure~\figref{Toponogov}.
\begin{figure}[htbp]
\centering
\includegraphics[width=0.5\linewidth]{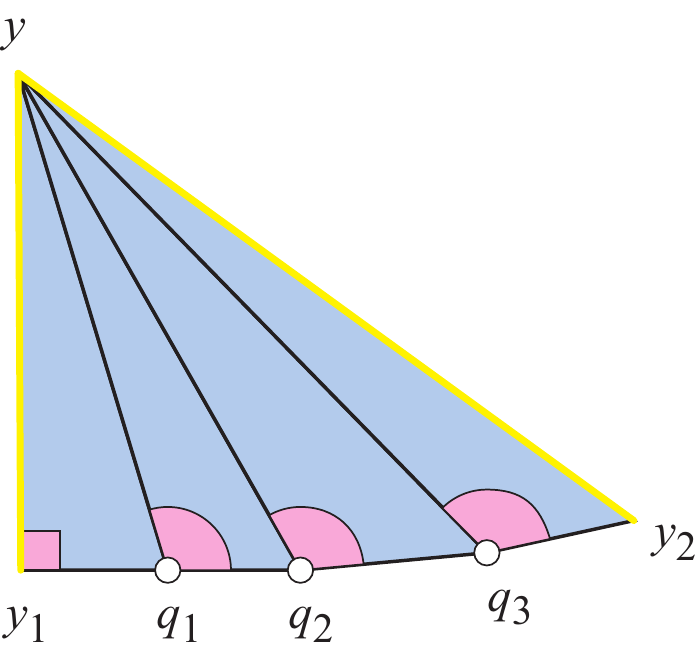}
\caption{Applying the Alexandrov-Toponogov Comparison Theorem 
to
the sequence of triangles apexed at $y$ along $Q$ from $y_1$ to $y_2$.}
\figlab{Toponogov}
\end{figure}
Continuing with the same logic,
$\d_{P_i}(y,q_2) \le \d_{\L_i}(y,q_2)$,
and
$\angle_{P_i} (y, q_2, q_1) \ge \angle_{\L_i} (y, q_2, q_1)$,
so
$\angle_{P_i} (y, q_2, q_3) \le \angle_{\L_i} (y, q_2, q_3)$.
This leads (by induction) to the conclusion
that the last distance is no longer on $P_i$:
$\d_{P_i}(y,y_2) \le \d_{\L_i}(y,y_2)$,
which is exactly what we aimed to establish.
We now show that there must be equality here.

Suppose instead some inequality in the chain of reasoning
above were strict.  This leads to 
$\d_{P_i}(y,y_2) < \d_{\L_i}(y,y_2)$.
But we already know that 
$\d_{\L_i}(y,y_2) = \d_{\L_i}(y,y_1) = \d_{P_i}(y,y_1)$,
and so $\d_{P_i}(y,y_2) < \d_{P_i}(y,y_1)$.
But this is a contradiction, because all projections from
$y$ to $Q$ on $P_i$ have the same minimal length.

Thus our conclusion above that there must be a third projection from $y$ to $y_3$ on $P_i$ follows, which means that
that $y$ is a junction of $C_{P_i}$.
{\hfill\ABox}\end{proof}


Having established the subpeel nesting property, we obtain next the same for peels.
Outside the isometric regions of $P_i$ and $\L_i$
that correspond to the external common subtrees, $\G_{\L_i}$ is in one-to-one
correspondence with what remains from $Q$ 
(because of the projection onto along segments).
I.e., each time $\G_{\L_i}$ enters a peel of $\L_i$, its part inside that peel is in one-to-one correspondence with the part of $Q$ bounding that peel minus 
the regions of $P_i$ and $\L_i$ that correspond to the external common subtree
(because of the projection along segments).

So visiting the same peel of $\L_i$ twice
(necessarily outside these isometric regions of $P_i$ and $\L_i$) 
would produce a contradiction with this bijective correspondence.

\noindent
We summarize the main import of this section in the following lemma.

\begin{lemma}[Cone Embedding]
Let the curve $Q$ on $\P$ live on a cone $\L_i$ on one side.
Then the corresponding half-surface $P_i$ can be isometrically 
embedded into that cone 
when it is cut along all edges of $C_{P_i}$ not incident to $Q$.
\lemlab{Embed}
\end{lemma}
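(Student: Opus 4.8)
The plan is to treat this lemma as the packaging of Lemma~\lemref{Nesting}: I would assemble the per-peel nestings established there into a single global isometry. First I would verify that cutting $P_i$ along all edges of $C_{P_i}$ not incident to $Q$ opens the half-surface into a simply connected flat region that is exactly tiled by the peels $\a_{P_i}$. The key facts here are that every point of $P_i$ lies on some shortest path from $Q$ to its cut point (hence in the closure of a peel), and that consecutive peels meet precisely along the projection segment issuing from their common leaf of $C_{P_i}$ (or along the uncut bisecting segment at a convex vertex of $Q$). Each interior edge of $C_{P_i}$ that we cut becomes a pair of boundary arcs, one on each of the two peels it separated, so the cut surface is the union of the peels glued consecutively around $Q$.

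Next I would define the embedding $\phi \colon P_i \to \L_i$ peel by peel. For each peel $\a_{P_i}$, Lemma~\lemref{Nesting} supplies an isometric placement nesting it inside a peel $\a_{\L_i}$, a placement that fixes the bounding subpath of $Q$ and carries each projection segment $u_l u'_l$ to the corresponding segment in $\L_i$. Because that placement is produced by tracing $\G_{\L_i}$ continuously, and the subpeels inside one peel are laid down contiguously along sides orthogonal to $Q$, the placements of two peels of $P_i$ sharing a leaf-projection segment agree along that segment. Hence $\phi$ is well defined across peel boundaries and is a local isometry, restricting to the identity on $Q$.

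It remains to show $\phi$ is injective, and this is where the real content sits; fortunately it is already secured inside the proof of Lemma~\lemref{Nesting}. There the trace $\G_{\L_i}$ is shown to be in one-to-one correspondence with $Q$ outside the common external subtrees along which $P_i$ and $\L_i$ coincide, so no peel of $\L_i$ is entered twice. Restated, distinct peels of $P_i$ map into distinct peels of $\L_i$, and within each target peel the subpeels are non-overlapping. Consequently $\phi$ carries the cut $P_i$ onto a subregion of $\L_i$ injectively, giving the claimed isometric embedding. The only genuine obstacle is the consistent gluing along interior projection segments---those at leaves of $C_{P_i}$ that are not vertices of $C_{\L_i}$, so that the shared segment lies inside a single peel $\a_{\L_i}$---but the contiguous subpeel placement of Lemma~\lemref{Nesting} removes exactly this difficulty, leaving only the bookkeeping that the peels cover the cut $P_i$ exactly once.
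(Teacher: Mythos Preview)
Your proposal is correct and takes essentially the same approach as the paper: both treat this lemma as a direct corollary of Lemma~\lemref{Nesting}, observing that cutting the interior edges of $C_{P_i}$ flattens $P_i$ (since the cut locus spans all vertices) and then invoking the peel nesting to place the pieces into $\L_i$ without overlap. The paper's proof is terser---it does not spell out the gluing of adjacent peels along shared projection segments or the identity on $Q$---but your added bookkeeping is sound and simply makes explicit what the paper leaves implicit.
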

\begin{proof}
Because $C_{P_i}$ spans the vertices of $P_i$, the cutting removes all
curvature and leaves a locally flat surface.
The peels of $P_i$ may then be embedded
within the cone $\L_i$ via Lemma~\lemref{Nesting},
which guarantees non-overlapping.
{\hfill\ABox}\end{proof}

\section{Source Unfolding}
\seclab{SourceUnfolding}
Henceforth let $\overline{P_i}$ be the embedded image of $P_i$ on the cone $\L_i$ 
given by Lemma~\lemref{Embed}. 
The proof of Theorem~\thmref{HalfSourceUnfolding} below
needs one more lemma, concerning cutting and flattening $\overline{P_i}$.


\begin{lemma}[Generator Cut]
Let $g$ be a point of $Q$ closest on the cone $\L_i$ to the apex $a$ of the cone.
Then cutting $\L_i$ along the generator $ag$ unfolds $\overline{P_i}$ to
one piece in the plane, i.e., the cut does not disconnect $\overline{P_i}$.
\lemlab{GenCut}
\end{lemma}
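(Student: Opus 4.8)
The plan is to develop the cone into a planar wedge and show that, thanks to the choice of $g$, the cut $ag$ acts as a radial slit rather than as a separating chord of $\overline{P_i}$. First I would cut $\L_i$ along the generator $ag$ and develop it isometrically onto a planar wedge $S$ whose angle equals the apex angle of $\L_i$; the apex $a$ maps to the vertex $\bar a$ of $S$, and $ag$ maps to the two bounding rays of $S$, with $g$ appearing as two points $g_1,g_2$ at distance $\d_{\L_i}(a,g)$ from $\bar a$, one on each ray. Under this development $\overline{P_i}$ maps to a planar region inside $S$, and the statement ``the cut does not disconnect $\overline{P_i}$'' becomes the assertion that this planar region is connected.

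Second, I would prove that the cut meets the outer boundary $Q$ of $\overline{P_i}$ only at $g$. Since $g$ realizes $\d_{\L_i}(a,Q)$ and, by the visibility hypothesis, every point of $Q$ lies on a unique generator through $a$, the generator carrying $ag$ meets $Q$ at $g$ and nowhere else; equivalently, in $S$ the developed image of $Q$ runs from $g_1$ to $g_2$ while remaining in $\{\,r\ge \d_{\L_i}(a,g)\,\}$ and touches the two bounding rays of $S$ only at their tips $g_1,g_2$. Thus the two copies of $Q$ produced by the cut are detached from one another only at the single point $g$.

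Third --- the crux --- I would read off the topological role of the cut. Because $Q=\partial\overline{P_i}$ encloses the apex $a$ while the development opens $P_i$ only along the interior cut locus (the edges of $C_{P_i}$ not incident to $Q$), the region $\overline{P_i}$ wraps once around $\bar a$: it is an annulus bounded outwardly by $Q$ and inwardly by the doubled interior cut locus, with $\bar a$ lying in the bounded complementary ``hole.'' (By Lemma~\lemref{Nesting} the inner boundary $C_{P_i}$ is nested toward $Q$, so it does not reach $\bar a$ and the hole is genuine; in the degenerate case where no edge of $C_{P_i}$ is cut, $\overline{P_i}$ is instead a disk with $\bar a$ interior.) The generator $ag$ therefore crosses $\overline{P_i}$ along a single arc joining its inner boundary to the one outer-boundary point $g$. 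Cutting an annulus along an arc that joins its two boundary components yields a single disk, and slitting a disk from an interior point to its boundary likewise keeps it connected; either way $\overline{P_i}$ remains one piece.

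The step I expect to be the main obstacle is the third one: verifying that $ag$ meets $\overline{P_i}$ in a single arc running from the apex side out to $g$, rather than in several arcs or in a $Q$-to-$Q$ chord that would separate the region. This is exactly where the nearest-point choice of $g$ is indispensable. Along the nearest generator the distance to $Q$ decreases monotonically to $0$ at $g$: for $x$ at arclength $t$ along $ag$ one has $\d_{\L_i}(x,Q)=\d_{\L_i}(a,g)-t$ with unique shortest path $xg$, so the interior of $ag$ carries no cut point of $\L_i$ and pierces the inner boundary of $\overline{P_i}$ cleanly exactly once. An arbitrary generator need not enjoy this monotonicity and could re-meet $Q$ or the cut locus, splitting $\overline{P_i}$; so the hypothesis that $g$ is closest to $a$ is essential, and I would make the single-crossing claim precise using this monotonicity together with the nesting from Lemma~\lemref{Nesting}.
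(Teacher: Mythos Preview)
Your plan is sound and lands on the same key fact the paper uses---that because $g$ realizes $\d_{\L_i}(a,Q)$, every point $x$ on $ag$ has $g$ as its unique nearest point in $Q$ (your monotonicity)---but you package it topologically whereas the paper argues directly through the peel decomposition. The paper never develops the cone or invokes the annulus picture; instead it shows that $ag$ can meet only \emph{one} peel of $\overline{P_i}$: if $ag$ hit a second peel $\a_2$ at some $x$ whose projection in $\a_2$ is $x'\neq g$, then $|xx'|<|xg|$, whence $|ax'|\le |ax|+|xx'|<|ax|+|xg|=|ag|$, contradicting minimality of $|ag|$. Since a peel is a single piece attached to $Q$ along an arc, cutting it along the projection segment $yg$ (where $y$ is the first point of $ag$ in that peel) cannot disconnect $\overline{P_i}$.

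The place your sketch is genuinely thin is exactly the one you flag. Your monotonicity shows the interior of $ag$ avoids $C_{\L_i}$, but the inner boundary of $\overline{P_i}$ is the image of the cut edges of $C_{P_i}$, not $C_{\L_i}$; peel nesting gives containment peel-by-peel, not a global inclusion of one cut locus in the other. To certify the single-arc claim you would still need to argue that $ag$, being confined to one $\L_i$-peel, meets at most one $P_i$-peel (the bijective correspondence noted at the end of the nesting proof) and that inside that peel the intersection is the segment $yg$---which is precisely the paper's computation. So your topological framing is correct but the crux reduces to the same peel-level triangle-inequality step the paper performs directly; what your route buys is a cleaner global picture (annulus cut along a radial arc), at the cost of extra scaffolding the paper's four-line metric argument avoids.
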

\begin{proof}
We will prove that $ag$ intersects only one peel $\a_1$ of $\overline{P_i}$.
In any case, let $\a_1$ be the last peel (from $a$) of $\overline{P_i}$ intersected
by $ag$.  
Let $y$ be the closest point to $a$ in $ag \cap \alpha_1$.
Then, in $\alpha_1$, $g$ is the point in $Q$ closest to $y$, so
$ag$ follows a shortest path segment within $\alpha_1$, projecting to $g$.
By the peel nesting property, this shortest path segment is included
in $\alpha_1$; therefore, cutting along $yg$ doesn't disconnect $\overline{P_i}$.

Suppose now that $ag$ meets another peel $\a_2$ of $\overline{P_i}$ at point $x$,
which projects in $\a_2$ (by a shortest path) to $x'$ on $Q$.
Because $g \in \a_1$ and $x' \in \a_2$ and $\a_1$, and because $\a_2$ are
distinct, $g$ and $x'$ are distinct.
Thus $|x x'| < |x g|$, where, 
to ease notation, we use $|p q|$ to represent $\d_{\L_i}(p,q)$.

This inequality contradicts the assumption that $ag$ is a shortest path
to $Q$:
$$
|ax'| \le |ax| + |xx'| < |ax| + |xg| = |ag| \; .
$$
The first inequality follows from the triangle inequality
on $\L_i$, which is itself a complete metric space.

Knowing that $ag$ meets just one peel of $\overline{P_i}$ shows that it
does not disconnect $\overline{P_i}$, and the lemma claim is established.
{\hfill\ABox}\end{proof}

We now have assembled all the machinery needed to establish our first main theorem:

\begin{theorem}[Half Source Unfolding]
For any $Q$ that lives on a cone $\L_i$ to the $P_i$-side,
such that each generator of $\L_i$ meets $Q$ in one point,
the source unfolding of the corresponding half $P_i$ of $\P$
is non-overlapping.
\thmlab{HalfSourceUnfolding}
\end{theorem}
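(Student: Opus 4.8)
The plan is to realize the source unfolding of $P_i$ as the composition of two isometric embeddings: first $P_i$ into the cone $\L_i$, and then $\L_i$ into the plane. The key observation is that the two families of cuts prescribed by the source unfolding---all edges of $C_{P_i}$ not incident to $Q$, together with the single extra segment running from the cut locus out to $Q$---are precisely the cuts that accomplish these two embeddings. Consequently almost all of the substantive work has already been done in Lemmas~\lemref{Embed} and~\lemref{GenCut}, and the theorem is mainly a matter of assembling them and certifying that the composition is overlap-free.

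First I would invoke Lemma~\lemref{Embed}: cutting all edges of $C_{P_i}$ not incident to $Q$ removes every concentration of curvature and embeds $P_i$ isometrically into $\L_i$ as $\overline{P_i}$, with the peels of $P_i$ nested inside those of $\L_i$ by Lemma~\lemref{Nesting}, hence with no self-overlap on the cone. Next I would identify the remaining ``precisely selected'' cut of the source unfolding with the portion of the generator $ag$ lying inside $\overline{P_i}$, where $g \in Q$ is the point of $Q$ closest on $\L_i$ to the apex $a$; this is exactly the segment $yg$ produced in the proof of Lemma~\lemref{GenCut}, whose endpoint $y$ lies on $C_{P_i}$. By Lemma~\lemref{GenCut}, the generator $ag$ meets $\overline{P_i}$ in this one segment only, so cutting $\L_i$ along $ag$ leaves $\overline{P_i}$ connected. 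The hypothesis that each generator of $\L_i$ meets $Q$ in a single point enters here, guaranteeing that $g$ and the generator cut are well defined and that $ag$ is a genuine shortest path to $Q$.

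It then remains to develop the cut cone $\L_i$ into the plane. Here I would use the standard fact that a cone of apex angle at most $2\pi$, cut along a single generator, develops isometrically onto a planar sector, which is an injective (hence overlap-free) map of the entire cone into the plane; restricting this map to $\overline{P_i} \subseteq \L_i$ carries $P_i$---cut exactly as the source unfolding prescribes---isometrically into the plane without overlap, while connectivity is preserved by Lemma~\lemref{GenCut}. The one genuine point I expect to need is that the apex angle really is at most $2\pi$, which I would obtain from the convexity of $\P$: by Gauss--Bonnet the curvature concentrated at the apex equals the total curvature of the part of $\P$ enclosed by $Q$ on the $P_i$-side (the two disks share the same boundary $Q$ with the same angles), and this curvature is nonnegative since $\P$ is convex, so the apex angle cannot exceed $2\pi$. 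The main obstacle is therefore not any single hard estimate but the bookkeeping of this last step: verifying that the composite of the two isometric embeddings is itself an overlap-free isometric embedding and that the cut set it induces on $P_i$ coincides with the cut set defining the source unfolding. Once the apex-angle bound is in hand, non-overlap of the planar image follows because $\overline{P_i}$ sits inside a cone that develops injectively.
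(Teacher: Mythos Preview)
Your proposal is correct and follows essentially the same approach as the paper: invoke Lemma~\lemref{Embed} to embed the cut $P_i$ into $\L_i$, then invoke Lemma~\lemref{GenCut} to cut a generator and develop the cone to the plane. You add explicit justification for why the cone develops injectively (the apex-angle bound via Gauss--Bonnet), which the paper takes for granted; note only that the apex of $\L_i$ need not lie on the $P_i$ side of $Q$, but in that case the same Gauss--Bonnet comparison matches the apex curvature with the total curvature of $P_{3-i}$ rather than $P_i$, which is still nonnegative.
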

\begin{proof}
First, cut all the edges of $C_{P_i}$ not incident to $Q$.
Then 
apply Lemma~\lemref{Embed} to obtain the embedding $\overline{P_i}$
in $\L_i$.
Finally, cut the generator $ag$ to a closest point $g \in Q$
and unfold $\overline{P_i}$ by Lemma~\lemref{GenCut} into the plane.
{\hfill\ABox}\end{proof}

Theorem~\thmref{HalfSourceUnfolding} can be viewed as a 
significant generalization
of the result in~\cite{o-eumap-08}, which established it
for ``medial axis polyhedra,'' 
whose base edges form $Q$ and whose lateral edges constitute the 
``upper component'' of the cut locus.

The unfolding of the $P_1$ half of the truncated cube
in Figure~\figref{CubeTrunc}(d) best illustrates this
theorem.


We now turn to joining the two halves.
Several strategies are available, and we select a simple one.
We refer again to Figure~\figref{ConvexQUnf}.

\begin{theorem}[Full Source Unfolding]
For $Q$ a convex curve in the class of curves we described,
additional cuts permit joining the halves to one, simple polygon.
\thmlab{SourceUnfolding}
\end{theorem}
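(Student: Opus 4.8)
The plan is to unfold the two halves by the machinery already in place and then glue them along $Q$, using in an essential way that $Q$ is convex to the $P_1$-side. First I would apply Theorem~\thmref{HalfSourceUnfolding} to unfold $P_1$, obtaining $\overline{P_1}$ in the plane. Because the source unfolding develops the generators of $\L_1$ to rays emanating from the apex $a_1$, the surface $\overline{P_1}$ is star-shaped from $a_1$; and because $Q$ is convex to $P_1$ (every left angle $L(q)\le\pi$, and angles are preserved by the isometric unfolding), the image $\widehat{Q}$ of $Q$ is a convex polygonal chain with $\overline{P_1}$ on its concave side. The single extra cut from $C_{P_1}$ to $Q$ promised in the preview is precisely the in-peel portion of the generator cut supplied by Lemma~\lemref{GenCut}, which opens $P_1$ to a flat piece without disconnecting it. I would keep $\overline{P_1}$ fixed and build the joined figure around $\widehat{Q}$.

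Next I would prepare $P_2$ for attachment. Cutting the entire cut locus $C_{P_2}$, now including its edges incident to $Q$, severs $P_2$ into peels, each attached to $Q$ only along a single arc. The key is to arrange that each such arc is a single straight edge of the convex chain $\widehat{Q}$: the cut locus already reaches the vertices of $Q$ that are convex to $P_2$ (they must be leaves of $C_{P_2}$), and the additional shortest-path cuts from $C_{P_2}$ to the remaining reflex vertices of $Q$ on the $P_2$-side supply a cut at every remaining bend of $\widehat{Q}$. After all cuts, each peel therefore meets $\widehat{Q}$ along one straight edge, and by the decomposition recorded when peels were defined, each peel is a convex planar region (a convex polygon, possibly with one parabolic-arc side). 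I would then re-glue each peel by an isometry to its straight edge of $\widehat{Q}$, placed on the outer, convex side, away from $a_1$.

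It remains to prove non-overlap, which I would split into a local and a global part. For the local part, at each vertex $q$ of $\widehat{Q}$ the total $P_2$-angle deposited is $R(q)$, while the exterior angle available outside the convex chain at the image of $q$ is $2\pi-L(q)$. Since $L(q)+R(q)=2\pi-\o_q$ with $\o_q\ge 0$ the curvature of $\P$ at $q$, we have $R(q)\le 2\pi-L(q)$, so the $P_2$-material fits into the exterior wedge and leaves only an empty gap of angular width $\o_q$; hence peels adjacent at $q$ do not overlap each other, nor does either overlap $\overline{P_1}$ near $q$. That a peel never overlaps $\overline{P_1}$ away from $q$ follows from convexity: each peel is a convex region attached along one edge of $\widehat{Q}$, hence lies in the outer closed half-plane bounded by the supporting line of that edge, whereas $\overline{P_1}$, lying on the concave side of the convex chain, lies in the inner closed half-plane of that same line.

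The main obstacle is the global claim that two \emph{non-adjacent} peels never overlap, for the supporting-line argument alone is insufficient: the outer half-planes of two distinct edges of a convex chain can intersect. My plan is to resolve this by a monotone angular sweep about $a_1$. Order the peels by the circular order of their attaching edges along $\widehat{Q}$; the outward edge-normals of a convex chain rotate monotonically, so the angular intervals that the peels subtend at $a_1$ occur in this same circular order. The local inequality $R(q)\le 2\pi-L(q)$ at each shared vertex bounds the angular spread contributed at each junction, and because each peel now attaches to a single edge and is convex, it cannot wrap around to reach a distant edge; for widely separated edges the chain itself separates the two outer half-planes, while for intermediate ones the monotone sweep shows the intervals stay disjoint. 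Assembling $\overline{P_1}$ together with the re-glued peels in this circular order then yields a single simple polygon, establishing the theorem. The delicate point I expect to require the most care is verifying that the angular intervals are genuinely disjoint throughout the sweep, i.e.\ that the accumulated turning of $\widehat{Q}$ together with the fitted wedges never forces a peel back across an earlier one.
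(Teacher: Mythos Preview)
Your plan is essentially the paper's own argument: unfold $P_1$ via Theorem~\thmref{HalfSourceUnfolding}, note that the planar image of $Q$ is convex because all its vertices are convex to the $P_1$-side, cut all of $C_{P_2}$ together with shortest-path segments to the reflex vertices of $Q$ on the $P_2$-side, and attach the resulting $P_2$-pieces along the edges of the convex image of $Q$, leaving an empty wedge of angle $\omega_{q_i}\ge 0$ at each vertex. The paper's non-overlap justification is terser than yours (``because these curvatures are positive, and $Q$ is convex, the joined pieces do not overlap''), so your added care about non-adjacent pieces is not unreasonable. One small correction: the $P_2$-pieces $R_i$ are not peels and need not be convex; what you actually use, and what holds, is that every point of $R_i$ projects to the edge $q_{i-1}q_i$, hence $R_i$ lies in the outer half-plane of that edge regardless of convexity.

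There is one genuine omission. You never verify that the generator-cut point $g$ on the $P_1$-side is distinct from every vertex $q_i$ of $Q$. If $g=q_i$, then the cut at $g$ opening $P_1$ meets the cut reaching $q_i$ on the $P_2$-side, and the whole figure falls into two pieces; non-overlap would hold, but you would not obtain \emph{one} simple polygon. The paper closes this gap by invoking~\cite[Cor.~1]{iiv-qfpcs-07}: the segment $ag$ meets $Q$ at an angle at least $\pi/2$ on each side, so $g$ could be a corner of $Q$ only if that corner were reflex to the $P_1$-side, contradicting the convexity of $Q$ there. You should add this step.
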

\begin{proof}
Let $P_1$ be the half of $\P$ to the
convex side of $Q$, and $P_2$ the half to the (possibly) nonconvex side.
We unfold $P_1$ just as described in Theorem~\thmref{HalfSourceUnfolding} above.
Note that the planar image of $Q$ is a convex curve, as all
of $Q$'s vertices are convex to the $P_1$-side.
To the nonconvex side $P_2$, we cut all of $C_{P_2}$, including those
edges incident to $Q$.
Recall these edges will be incident to vertices that are convex vertices
to the $P_2$-side.
In addition, we cut shortest path 
segments from $C_{P_2}$ to the other (reflex) vertices of $Q$
(none if $Q$ is a quasigeodesic, but possibly several if $Q$ is merely
convex to one side).
These combined cuts partition $P_2$ into polygonal regions $R_i$ 
($i=0,1,2,\ldots$), each of which projects onto its base $q_{i-1} q_i$ on $Q$.
See Figure~\figref{ConvexQUnf}.
The regions are separated by empty cones, whose bounding rays 
are separated by an angle equal to the curvature at $q_i$.
Because these curvatures are positive, and $Q$ is convex, the joined
pieces do not overlap.

It only remains to argue that the cut to $g$ on the $P_1$-side
does not coincide with a cut to a vertex $q_i$ on the $P_2$-side,
for that would disconnect the unfolding into two pieces.
This follows from~\cite[Cor.~1]{iiv-qfpcs-07},
which shows that $g$ could only be a corner of $Q$ if it were
reflex to the $P_1$-side, because $ag$ makes an angle at least
$\pi/2$ with $Q$ to each side of $ag$.
But we know that $Q$ is convex to the $P_1$-side, so it cannot be
that $g=q_i$.
{\hfill\ABox}\end{proof}



For a convex curve $Q$ shrinking to a point $x$,
the full source unfolding with respect to $Q$ approaches the point source unfolding
with respect to $x$,
as one would expect.

\section{Future Work}
\seclab{Future}
We have not yet addressed the computational complexity of
constructing the source unfolding from a given $Q$,
but we expect it will be polynomial in the complexity of $Q$.

A secondary issue is finding a $Q$ that satisfies our conditions.
Although it is known that every convex polyhedron has at least three
distinct
simple closed quasigeodesics,
there is no polynomial-time algorithm known for finding
one~\cite[Prob.~24.2]{do-gfalop-07}.
However, it is easy to find convex curves through at most one vertex:
for example, a convex curve inside any face, or the curve obtained
by truncating any vertex $v$ of $\P$ orthogonal to a vector within
the tangent cone of $v$ (e.g., Figure~\figref{TetraSlice}(a)).
Also, one can construct
quasigeodesic
loops through at most one vertex by a minor modification of
the technique described in~\cite{iov-sucpql-10}.
Such a curve satisfies conditions~(a) and~(b) in
Section~\secref{Introduction}.

Finally, we leave unresolved
determining the largest class of curves $Q$ for which 
Theorems~\thmref{HalfSourceUnfolding} and~\thmref{SourceUnfolding} hold.

{\bf Acknowledgments.}
We are grateful to the remarkably perceptive comments (and the patience)
of Stefan Langerman and several other anonymous referees.

\bibliographystyle{alpha}
\bibliography{/Users/orourke/bib/geom/geom}
\normalsize

\newpage
\section*{Appendix: \\Alexandrov-Toponogov Comparison Theorem}
Several well-known comparison results for convex surfaces
are usually identified as variants of
Toponogov's 1959 theorem.
For
triangles on convex surfaces, however, 
Alexandrov 
proved them, without any assumption of differentiability,
as early as 1948,
and Pizzetti~\cite{p-catge-07} considered the differentiable case in 1907.
See~\cite[p.242]{a-igcs-06} or
\cite[p.32]{az-igs-67} for versions in English; 
see also
\cite{accddlopt-calgs-08}.

Essentially, the results compare triangles or ``hinges''
on a given surface to those in the plane.

A {\it triangle} in a convex surface is a collection of three segments 
$\gamma_1,\gamma_2,\gamma_3$ such that $\gamma_i$ and $\gamma_{i+1}$ have the common endpoint $a_{i+2}$ (indices mod 3).
We shall denote the triangle by $\gamma _1 \gamma _2 \gamma _3$ or, 
if the segments are clear from the context, by $a_1 a_2 a_3$. 
We use $\lambda \left(\gamma\right)$ to denote the length of the curve $\gamma$.

The first lemma says that if you draw a triangle in the plane with the
same lengths as a triangle on a convex surface, the planar triangle
angles in general get smaller:
they are at most as large as the convex-surface angles.

\begin{lemma}
\lemlab{comp_1}
For any triangle $\gamma_1 \gamma_2 \gamma_3$ in a convex surface there exists a planar triangle ${\overline\gamma_1} {\overline\gamma_2} {\overline\gamma_3}$ with 
$\lambda \left(\gamma_i\right) = \lambda \left({\overline\gamma_i}\right)$. We have
$\angle{\overline\gamma_i} {\overline\gamma_{i+1}} \le 
\angle \gamma_i \gamma_{i+1}$, $i=1,2,3$ ({\rm mod} $3$), and equality holds if and
only if $\gamma_1 \gamma_2 \gamma_3$ is isometric to 
${\overline\gamma_1} {\overline\gamma_2} {\overline\gamma_3}$.
\end{lemma}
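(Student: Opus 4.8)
The plan is to separate the one substantive assertion—that each surface angle dominates the corresponding planar angle, together with rigidity—from the essentially free existence statement. Because each $\gamma_i$ is a segment, its length equals the intrinsic distance between its endpoints $a_{i+1},a_{i+2}$, so the three lengths $\lambda(\gamma_i)$ automatically satisfy the triangle inequality (each is at most the sum of the other two). Hence a planar triangle $\overline\gamma_1\overline\gamma_2\overline\gamma_3$ with $\lambda(\overline\gamma_i)=\lambda(\gamma_i)$ exists, degenerating to a segment exactly when one of those inequalities is an equality. All the real content lies in the bound $\angle\overline\gamma_i\overline\gamma_{i+1}\le\angle\gamma_i\gamma_{i+1}$ and its equality case.

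The first reduction I would make is from arbitrary convex surfaces to convex polyhedra. By Alexandrov's approximation theorem, the intrinsic metric of any convex surface is a limit of polyhedral convex metrics, and under such convergence segments converge to segments and the angles between them converge as well. Since both sides of the desired inequality are continuous in this limit, it suffices to prove the angle inequality for a geodesic triangle on a convex polyhedron. I would recover the equality statement separately at the very end rather than by a limiting argument, since strict inequalities need not survive limits.

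On a convex polyhedron the essential feature is that the total face angle around every vertex is at most $2\pi$, i.e. every vertex carries nonnegative curvature (angle deficit) $\omega(v)\ge 0$. I would then argue by induction on the number of vertices enclosed by the interior or sides of the triangle $a_1a_2a_3$. In the base case the triangle meets no vertex away from its corners, so unrolling the faces it crosses develops it isometrically onto a planar triangle, and all three angles are equal. For the inductive step I would choose an enclosed vertex $v$, join it by segments to the three corners to split $a_1a_2a_3$ into three subtriangles, each enclosing strictly fewer vertices, apply the inductive hypothesis to each, and then reassemble. The reassembly is governed by Alexandrov's gluing lemma: developing the three comparison subtriangles about a common image of $v$ leaves an angular gap equal to $\omega(v)\ge 0$, and flattening that gap—passing to the genuine planar comparison triangle of the whole—can only decrease the angles seen at $a_1,a_2,a_3$. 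This yields the per-angle inequality for the full triangle.

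The step I expect to be the genuine obstacle is precisely this upgrade from an angle-sum statement to a per-angle statement. Gauss--Bonnet immediately gives $\sum_i \angle\gamma_i\gamma_{i+1}=\pi+(\text{enclosed curvature})\ge\pi$, but that controls only the sum; the three individual comparisons require the monotonicity packaged in Alexandrov's lemma, together with careful bookkeeping of which angle each development step affects and in which direction. For rigidity I would trace equality backward through the induction: equality in any one angle forces every gluing gap to vanish, hence $\omega(v)=0$ for all enclosed $v$ and each development to be isometric, so the triangle is intrinsically flat and therefore isometric to $\overline\gamma_1\overline\gamma_2\overline\gamma_3$, which is exactly the claimed converse.
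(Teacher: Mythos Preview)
The paper does not actually prove this lemma. It is placed in an appendix, attributed to Alexandrov (1948) and Pizzetti (1907), and accompanied only by literature references; the authors invoke it as a known comparison theorem rather than establishing it themselves. So there is no ``paper's own proof'' to compare against.

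Your sketch follows the classical Alexandrov line: approximate by polyhedra, then induct on enclosed vertices, using the nonnegative angle deficit and a gluing/arm lemma to push the comparison through each subdivision. That is the right architecture, and you correctly isolate the nontrivial point---that one needs the \emph{per-angle} inequality, not merely the Gauss--Bonnet angle-sum. Two places deserve more care if you want this to stand on its own. First, in the inductive step you assert that each of the three subtriangles encloses strictly fewer vertices; what is actually true is that the \emph{total} count drops (you removed $v$), so the induction should be on that total. Second, joining an interior vertex $v$ to all three corners by segments that remain inside the original triangle is not automatic on a polyhedral surface; the standard treatments avoid this either by connecting $v$ to a single corner (splitting into two triangles) or by choosing $v$ adjacent to a side and using Alexandrov's convexity/arm lemma directly. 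Your rigidity argument---tracing equality back to force every deficit to vanish---is the standard one and is fine once the inequality is secured.
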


A {\it hinge} is a pair of segments,
$\gamma_1$ from $a$ to $b$ and 
$\gamma_2$ from $a$ to $c$, and 
the angle $\angle bac$ included between them at $a$.
We denote the hinge by $\gamma _1 \gamma _2$. 

The second lemma says that if you draw a hinge in the plane with the same
angle as a hinge on a convex surface, the planar hinge endpoint separation
in general gets larger.
Next, we denote by $\overline \d$ the Euclidean distance, and by $\d$ the distance on the surface.

\begin{lemma}
\lemlab{comp_2}
For any hinge $\gamma_1 \gamma_2$ in a convex surface $S$ there exists a planar  hinge ${\overline\gamma_1} {\overline\gamma_2}$ with 
$\angle {\overline b} {\overline a} {\overline c} = \angle bac$. We have
$\d (b,c) \le \overline \d ({\overline b}, {\overline c})$, and equality holds if and only if there exists a segment joining $b$ to $c$ in $S$ such that $abc$ is isometric to ${\overline a} {\overline b} {\overline c}$.
\end{lemma}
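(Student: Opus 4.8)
The plan is to derive the hinge (side--angle--side) comparison of Lemma~\lemref{comp_2} from the already-stated triangle (side--side--side) comparison of Lemma~\lemref{comp_1}, using nothing beyond the planar law of cosines. Given the hinge $\gamma_1 \gamma_2$ on the convex surface $S$, with $\gamma_1$ the segment from $a$ to $b$, $\gamma_2$ the segment from $a$ to $c$, and included angle $\angle bac$, the first step is to complete the hinge to a genuine surface triangle. Because a convex surface is a complete length space, a shortest path (segment) from $b$ to $c$ exists and realizes $\d(b,c)$; adjoining it to $\gamma_1$ and $\gamma_2$ yields a triangle $abc$ on $S$ in the sense defined in the appendix, whose angle at $a$ is exactly the hinge angle $\angle bac$, since the two sides issuing from $a$ are the very segments $\gamma_1,\gamma_2$.

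Next I would apply Lemma~\lemref{comp_1} to this triangle. It produces a planar triangle $\overline a\,\overline b\,\overline c$ with the same three side lengths---in particular $\overline\d(\overline a,\overline b)=\lambda(\gamma_1)$, $\overline\d(\overline a,\overline c)=\lambda(\gamma_2)$, and $\overline\d(\overline b,\overline c)=\d(b,c)$---and with apex angle $\angle \overline b\,\overline a\,\overline c \le \angle bac$. On the other hand, the planar hinge asserted by Lemma~\lemref{comp_2} is the triangle with the same two adjacent sides $\lambda(\gamma_1),\lambda(\gamma_2)$ but apex angle \emph{exactly} $\angle bac$, its third side being the quantity $\overline\d(\overline b,\overline c)$ in the statement. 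These two planar triangles share the two sides meeting at the apex and differ only in the apex angle. By the law of cosines, the length of the side opposite the apex is a strictly increasing function of the apex angle for fixed adjacent sides. Since the triangle furnished by Lemma~\lemref{comp_1} has the smaller (or equal) apex angle, its opposite side $\d(b,c)$ is at most the hinge's opposite side, which is the desired inequality $\d(b,c)\le \overline\d(\overline b,\overline c)$.

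Finally, the equality case follows by running the monotonicity backwards. If $\d(b,c)=\overline\d(\overline b,\overline c)$, then strict monotonicity of the law of cosines forces $\angle \overline b\,\overline a\,\overline c = \angle bac$, i.e.\ equality in the angle bound of Lemma~\lemref{comp_1}; its equality clause then gives that the surface triangle $abc$ is isometric to its planar comparison triangle, which in this case coincides with the hinge triangle $\overline a\,\overline b\,\overline c$, exactly the asserted equality condition. I expect the only genuine subtlety to lie in the first step---guaranteeing the connecting segment $bc$ and verifying that the triangle it forms has apex angle equal to the hinge angle---since everything afterward is elementary planar monotonicity. The true geometric depth is carried entirely by Lemma~\lemref{comp_1}, which we are entitled to assume.
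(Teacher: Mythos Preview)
The paper does not actually prove Lemma~\lemref{comp_2}; the appendix merely states both comparison lemmas and attributes them to Alexandrov (and Pizzetti, Toponogov), with citations to~\cite{a-igcs-06,az-igs-67,accddlopt-calgs-08}. So there is no paper proof to compare against.

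Your derivation of the hinge comparison from the triangle comparison is correct and is the standard way these two formulations are shown to be equivalent in Alexandrov geometry. Completing the hinge with a minimizing segment $bc$ (which exists because a convex surface is a complete, locally compact length space), invoking Lemma~\lemref{comp_1} to get the planar comparison triangle with the same three side lengths, and then using the monotonicity of the law of cosines in the apex angle is exactly right; the equality case also goes through as you wrote. The one caveat you already flagged is the only real one: you must check that the angle at $a$ in the completed surface triangle is the hinge angle, which holds because the two sides of the triangle issuing from $a$ are literally $\gamma_1$ and $\gamma_2$. Degenerate hinges (e.g., $\angle bac=0$ or $\pi$, or $b=c$) are immediate. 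In short, your argument is sound, and it supplies what the paper chose to cite rather than prove.
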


\end{document}